\pdfoutput=1
\RequirePackage{ifpdf}
\ifpdf % We~are running pdfTeX in pdf mode
\documentclass[pdftex]{sigma}%,draft
\else
\documentclass{sigma}
\fi

\numberwithin{equation}{section}

\newtheorem{Theorem}{Theorem}[section]
\newtheorem{Lemma}[Theorem]{Lemma}
\newtheorem{Proposition}[Theorem]{Proposition}
 { \theoremstyle{definition}
\newtheorem{Definition}[Theorem]{Definition}
\newtheorem{question}[Theorem]{Question}
\newtheorem{Example}[Theorem]{Example}}

\newcommand{\dd}{{\rm d}}
\newcommand{\TT}[1]{\begin{matrix}T\vspace{-6pt}\\{\scriptscriptstyle #1}\vspace{-2pt} \\T\end{matrix}}
\newcommand{\TA}[1]{\begin{matrix}T\vspace{-6pt}\\{\scriptscriptstyle #1}\vspace{-2pt} \\A\end{matrix}}
\newcommand{\AT}[1]{\begin{matrix}A\vspace{-6pt}\\{\scriptscriptstyle #1}\vspace{-2pt} \\T\end{matrix}}
\renewcommand{\AA}[1]{\begin{matrix}A\vspace{-6pt}\\{\scriptscriptstyle #1}\vspace{-2pt} \\A\end{matrix}}
\newcommand{\FF}{{\mathbb F}}
\newcommand{\ZZ}{{\mathbb Z}}

\begin{document}
\allowdisplaybreaks

\newcommand{\arXivNumber}{2102.12383}

\renewcommand{\thefootnote}{}

\renewcommand{\PaperNumber}{100}

\FirstPageHeading

\ShortArticleName{$c_2$ Invariants of Hourglass Chains via Quadratic Denominator Reduction}

\ArticleName{$\boldsymbol{c_2}$ Invariants of Hourglass Chains\\ via Quadratic Denominator Reduction\footnote{This paper is a~contribution to the Special Issue on Algebraic Structures in Perturbative Quantum Field Theory in honor of Dirk Kreimer for his 60th birthday.
The~full collection is available at \href{https://www.emis.de/journals/SIGMA/Kreimer.html}{https://www.emis.de/journals/SIGMA/Kreimer.html}}}

\Author{Oliver SCHNETZ~$^{\rm a}$ and Karen YEATS~$^{\rm b}$}

\AuthorNameForHeading{O.~Schnetz and K.~Yeats}

\Address{$^{\rm a)}$~Department Mathematik, Friedrich-Alexander-Universit\"at Erlangen-N\"urnberg,\\
\hphantom{$^{\rm a)}$}~Cauerstrasse 11, 91058, Erlangen, Germany}
\EmailD{\href{mailto:schnetz@mi.uni-erlangen.de}{schnetz@mi.uni-erlangen.de}}
\URLaddressD{\url{https://www.math.fau.de/person/oliver-schnetz/}}

\Address{$^{\rm b)}$~Department of Combinatorics and Optimization, University of Waterloo,\\
\hphantom{$^{\rm b)}$}~Waterloo, Ontario, N2L 3G1, Canada}
\EmailD{\href{mailto:kayeats@uwaterloo.ca}{kayeats@uwaterloo.ca}}

\ArticleDates{Received February 25, 2021, in final form November 02, 2021; Published online November 10, 2021}

\Abstract{We introduce families of four-regular graphs consisting of chains of hourglasses which are attached to a finite kernel. We prove a formula for the $c_2$ invariant of these hourglass chains which only depends on the kernel. For different kernels these hourglass chains typically give rise to different $c_2$ invariants. An exhaustive search for the $c_2$ invariants of hourglass chains with kernels that have a maximum of ten vertices provides Calabi--Yau manifolds with point-counts which match the Fourier coefficients of modular forms whose weights and levels are [4,8], [4,16], [6,4], and [9,4]. Assuming the completion conjecture, we show that no modular form of weight~2 and level $\leq1000$ corresponds to the $c_2$ of such hourglass chains. This provides further evidence in favour of the conjecture that curves are absent in~$c_2$ invariants of $\phi^4$ quantum field theory.}

\Keywords{$c_2$ invariant; denominator reduction; quadratic denominator reduction; Feynman period}

\Classification{81T18}

\renewcommand{\thefootnote}{\arabic{footnote}}
\setcounter{footnote}{0}

\section{Introduction}

Given a graph $G$, to each edge $e\in E(G)$ associate a variable $\alpha_e$ and define the \emph{Kirchhoff polynomial} or \emph{first Symanzik polynomial} to be
\begin{gather*}
\Psi_G = \sum_{T}\prod_{e\not\in T}\alpha_e,
\end{gather*}
where the sum is over all spanning trees $T$ of $G$. When it converges define the \emph{Feynman period} of $G$ to be the projective integral
\begin{gather}\label{Pdef}
P_G = \int_{\alpha_e\geq 0} \frac{\Omega}{\Psi_G^2},
\end{gather}
where $\Omega = \sum_{i=1}^{|E(G)|} (-1)^i {\rm d}\alpha_1\cdots \widehat{{\rm d}\alpha_i}\cdots {\rm d}\alpha_{|E(G)|}$.
The Feynman period $P_G$ is the residue of its Feynman integral. It contributes to the $\beta$-function which controls the way the physical coupling changes with momentum, see, e.g.,~\cite{IZ}.
The $\beta$-function also plays a prominent role for applications to phase transitions in statistical physics, see, e.g.,~\cite{KP, ZJ}.
So, the Feynman period is of significance in various branches of physics. Furthermore, the Feynman period exhibits the arithmetic content of the Feynman integral, so it is very interesting for anyone
studying the geometry and number theory underlying quantum field theory~\cite{BEK,BK,PScoaction,Scensus}.

In~\cite{Brbig} Francis Brown introduced denominator reduction as a tool for studying Feynman periods algebraically. Denominator reduction describes how the denominators of the Feynman periods change through successive edge integrations. When these denominators fail to factor into linear pieces denominator reduction ends. Keeping track of the numerators in this process using multiple polylogarithms gives an algorithm for parametric Feynman integration which fails when denominator reduction stops~\cite{Brbig, BrH1}.
Improvements and variants of this integration approach have been implemented~\cite{Panzer:HyperInt}, but the constraint of the non-factoring denominators remains.

Denominator reduction itself is purely algebraic, but still carries substantial information about these integrals since the numerator polynomials in each integration step are typically of simpler
geometry than the denominator. This simplicity of the numerators is inherited from the second integration step, where the polynomials in the numerator are mere coefficients of $\Psi_G$
whereas the denominator is a resultant. The geometric domination of the denominator, however, is not a fully general feature. There exist (rare) examples where the Feynman integral $P_G$
has a geometry which is missed by denominator reduction.

Denominator reduction tells us about weight drop~\cite{BrY} and can be used to compute the $c_2$ invariant, an arithmetic graph invariant that predicts properties of
the Feynman periods~\cite{K3, SFq} (for a definition see Section~\ref{sec c2}).

In~\cite{Sc2} one of us defined a generalized denominator reduction, called \emph{quadratic denominator reduction}, that can always progress at least one more step than standard denominator reduction, and sometimes much further. At the cost of not working for even prime powers $\neq2$, this quadratic denominator reduction can be used to compute $c_2$ invariants and so still tells us about the geometries underlying the Feynman periods.

The geometric idea behind quadratic denominator reduction is that in certain cases a denominator with a square root still has rational geometry as it defines a projective line.

\begin{figure}[t]\centering
 \includegraphics{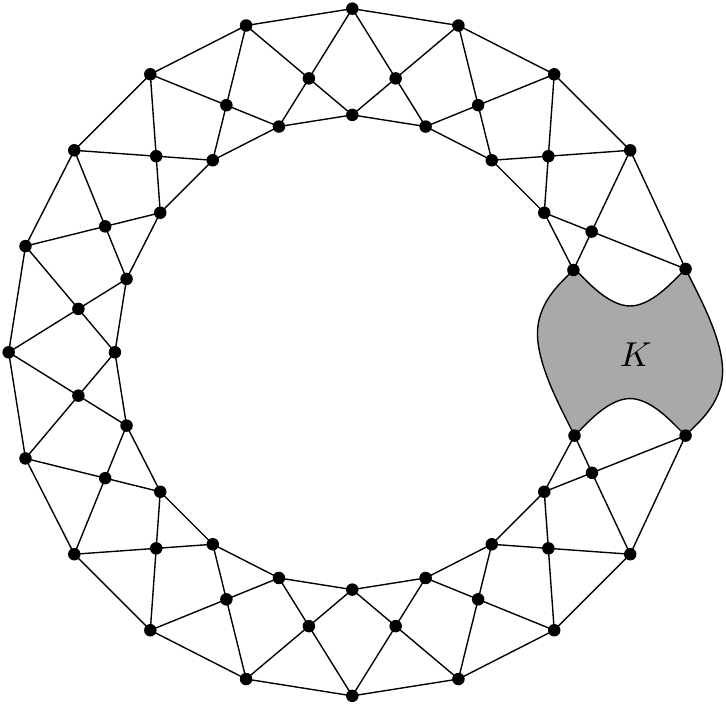}
 \caption{An illustration of the type of graph we consider. The shaded section is the kernel which is attached to the hourglass chain.}\label{fig graph type}
\end{figure}

As a demonstration of the power of quadratic denominator reduction, in this paper we will study infinite families of graphs built by attaching a chain of hourglasses to a finite kernel,
see~Figure~\ref{fig graph type}. The $c_2$ invariant for any such graph will only depend on the kernel, giving arbitrarily many infinite families of graphs for which we know the $c_2$ invariant
(for the prime 2 and odd prime powers).

With the notions of hourglass, kernel, and hourglass chain as illustrated in Figures~\ref{fig graph type} and \ref{fig Kprime}
we can state our theorem.
\begin{Theorem}\label{mainthm}
 Let $K$ be a kernel and let $L\in \mathcal{G}_K$ be a graph of the type we consider with an hourglass chain of length at least $6$, see Section~$\ref{sec hourglasses}$ for definitions.
 Let $v$ be a vertex of the hourglass chain that is shared by the second and third hourglasses from one end.
 Let $K'$ be $K$ with new edges joining the external vertices as in Figure~$\ref{fig Kprime}$.
 Index these new edges by $1$ and $2$ (with variables $\alpha_1$ and $\alpha_2$, respectively).
 If $q$ is an odd prime power then
 \begin{gather}\label{maineq}
 c_2^{(q)}(L-v)\equiv\big(\alpha_1\big(\Psi_{K'}^{1,2}(\alpha)\big)^2\Psi_{K'}^{2,2}(\alpha)\Psi_{K', 2}(\alpha)\big)_q\mod q,
 \end{gather}
 where $(F)_q$ is the Legendre sum of the polynomial $F$, see Definition~$\ref{deflege}$.
 For $q=2$ the $c_2$ invariant of $L-v$ vanishes, $c_2^{(2)}(L-v)\equiv0\mod 2$.
\end{Theorem}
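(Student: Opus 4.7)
The plan is to attack this by running quadratic denominator reduction on $\Psi_{L-v}$, integrating out the edges of the hourglass chain one hourglass at a time, and identifying a stable pattern that eventually reduces the problem to polynomials supported on the kernel. Concretely, I would begin by fixing a convenient ordering of the edges of $L-v$ that starts at the free end of the hourglass chain and walks inward toward the kernel, leaving the kernel edges (and the two edges that connect the chain to the kernel) until the end. Because $v$ is the shared vertex of the second and third hourglasses from one end, removing $v$ splits the short end of the chain into a small, explicitly known piece whose Dodgson polynomials can be computed by hand; the goal of the first block of reductions is to absorb that small piece into a simple quadratic expression that launches the main reduction.

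Next I would work out the effect of performing the denominator reduction across a single generic hourglass in the chain. An hourglass has a very rigid topology (two triangles sharing a vertex), so its Kirchhoff polynomial factors in a controlled way, and the associated $5$-invariants are small polynomials whose resultants with the next hourglass admit a visible quadratic factor. The key step is to establish a one-hourglass transfer lemma: after integrating the four edges of one hourglass, the resulting denominator (modulo squares, which is all the Legendre sum sees) has the same shape as before, differing only by a benign quadratic twist. This is exactly the regime where ordinary denominator reduction stalls and quadratic denominator reduction from \cite{Sc2} is essential; without it the recursion cannot be closed. Iterating the transfer lemma through the interior hourglasses collapses the whole middle of the chain, so that only a small quadratic remainder plus the two boundary edges (which become $\alpha_1$ and $\alpha_2$ in $K'$) survive.

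The third block is the interface with the kernel. Here I would use the standard Dodgson identities together with the contraction--deletion behaviour at the two external vertices of $K$ to rewrite the remaining polynomials as minors of $\Psi_{K'}$. The boundary edges $\alpha_1$ and $\alpha_2$ are precisely the extra edges in $K'$, and a careful bookkeeping of which rows and columns are being deleted in the Laplacian description of $\Psi_{K'}$ should produce the combination $\alpha_1(\Psi_{K'}^{1,2})^2 \Psi_{K'}^{2,2}\Psi_{K',2}$ up to a square, which is invisible to the Legendre symbol. At this stage \eqref{maineq} follows from the $c_2$--Legendre formula for graphs whose denominator reduction proceeds far enough, as in \cite{Sc2}. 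The $q=2$ statement I would handle separately: for $q=2$ the quadratic denominator reduction only sees the parity of point counts, and the bowtie factor in each interior hourglass contributes an even number of $\mathbb{F}_2$-points, forcing $c_2^{(2)}(L-v)\equiv 0$.

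The main obstacle is the transfer lemma in paragraph two: showing that the shape of the reduced denominator is genuinely preserved modulo squares under the passage from one hourglass to the next, and not merely up to a factor that degrades with each iteration. This is where the length bound (chain of length at least $6$) should enter, since one needs enough interior hourglasses that the boundary irregularities at both ends decouple before the kernel interface is reached. Once that lemma is in hand, the remaining work is careful but routine Dodgson-polynomial manipulation; it is the quadratic character of the transfer, together with the fact that it is invisible to $(\cdot)_q$ for odd $q$, that makes the final expression depend only on $K'$ and not on the number of hourglasses.
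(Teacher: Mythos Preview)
Your outline has the right general shape---reduce the chain edge-by-edge until only kernel data remains---but it misses two mechanisms that the paper's proof actually relies on, and without them the plan does not go through.

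First, the ``one-hourglass transfer lemma'' you describe (integrate the edges of a single hourglass and recover the same shape modulo a square) is not how the recursion closes. In the paper, after the messy initial and triangle reductions one arrives at an expression of a fixed form involving the top remaining hourglass $X_n$. Reducing all of $X_n$'s edges alone does \emph{not} reproduce that form. What happens instead is that one first reduces the edge $a_n$ of $X_n$, then reaches \emph{into the next hourglass} and reduces $a_{n-1}$; only this cross-hourglass step produces a perfect square factor $T_{n-1,n}^2$ mixing variables of both $X_{n-1}$ and $X_n$. That square is what enables the quadratic reduction of the remaining five variables $b_n,c_n,d_n,e_n,f_n$ of $X_n$, after which the expression returns to its earlier form with $n$ replaced by $n-1$. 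So the invariant of the recursion is not supported on a single hourglass, and your version of the lemma would not find the square you need.

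Second, and more seriously, your plan assumes quadratic denominator reduction runs all the way to the kernel expression. It does not: once only the last hourglass $X_1$ and the kernel remain, the paper states explicitly that quadratic denominator reduction \emph{stops}. The endgame requires a separate scaling argument: one rescales the kernel variables by a degree-zero monomial $S=Y_1/[b_1c_1(d_1+e_1+f_1)]$ in the $X_1$ variables, which (after handling the singular locus via an odd-degree Legendre-sum vanishing lemma and inclusion--exclusion) separates the $X_1$ variables from the kernel variables inside the Legendre sum. Only \emph{after} this scaling can four further quadratic reductions in $f_1,e_1,d_1,b_1$ be performed, yielding $\alpha_1(\Psi_{K'}^{1,2})^2\Psi_{K'}^{2,2}\Psi_{K',2}$. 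Your ``careful bookkeeping of rows and columns'' does not substitute for this; the scaling is a genuinely new ingredient beyond denominator reduction, and it is where the restriction to odd $q$ is actually used.

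Two smaller points: an hourglass has six edges, not four; and the $q=2$ vanishing is not a point-count parity argument on bowties but comes from an explicit factor of $4$ in the reduced polynomial, which forces ${}^n\Psi^2_G\equiv 0\bmod 2$ and hence $c_2^{(2)}\equiv 0$ by the quadratic-reduction theorem.
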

The {\em Dodgson} polynomials on the right hand side of (\ref{maineq}) are defined in Section~\ref{sec background}.
Good introductory references for the general role of Dodgson polynomial in physics are, e.g.,~\cite{Brbig, Denham, Patt}.

\begin{figure}
\centering
 \includegraphics{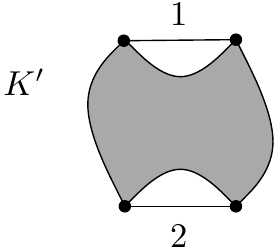}
 \caption{The graph $K'$ is the kernel $K$ with the two extra edges 1 and 2.}\label{fig Kprime}
\end{figure}

\begin{figure}[t]
\centering
\includegraphics[scale=.7]{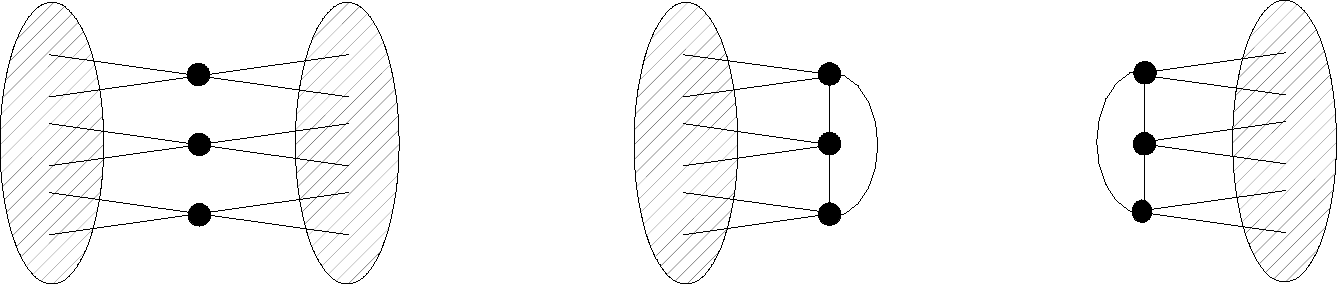}
\put(-240,24){\makebox(0,0)[lb]{$\phantom{P(\bar{G})}\qquad\quad\ \ =\qquad\quad \phantom{P(\bar{G}_1)}\qquad\times$}}
\put(-240,-20){\makebox(0,0)[lb]{$P(\bar{G})\qquad\quad\ \ =\qquad\quad P(\bar{G}_1)\qquad\times\qquad P(\bar{G}_2)$}}
\caption{Vertex connectivity 3 leads to a product of periods.}
\label{fig product}
\end{figure}

\begin{figure}[t]
\centering
\includegraphics[scale=0.7]{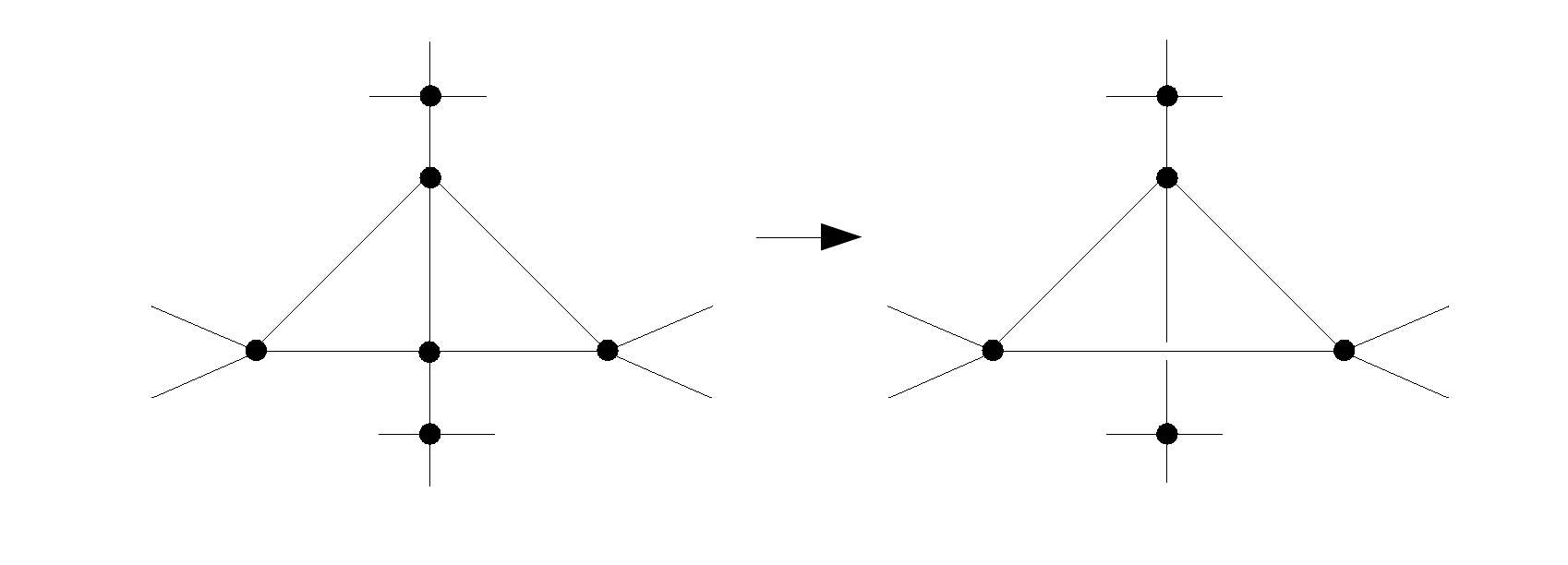}
\put(-222,92){\makebox(0,0)[lb]{$1$}}
\put(-222,72){\makebox(0,0)[lb]{$a$}}
\put(-222,34){\makebox(0,0)[lb]{$b$}}
\put(-265,37){\makebox(0,0)[lb]{$c$}}
\put(-188,37){\makebox(0,0)[lb]{$d$}}
\put(-61,92){\makebox(0,0)[lb]{$1$}}
\put(-61,72){\makebox(0,0)[lb]{$a$}}
\put(-61,34){\makebox(0,0)[lb]{$b$}}
\put(-104,37){\makebox(0,0)[lb]{$c$}}
\put(-27,37){\makebox(0,0)[lb]{$d$}}
\put(-222,2){\makebox(0,0)[lb]{$2$}}
\put(-61,2){\makebox(0,0)[lb]{$2$}}
\caption{Double triangle reduction: Replace a joint vertex of two attached triangles by a crossing.}
\label{fig double triangle reduction}
\end{figure}

Graphs $G$ for which the Feynman period exists, sometimes have structures which lead to graphical reductions, see Section~\ref{sec phi4} and~\cite{Scensus}.
If $G$ has a three vertex split the period factorizes, see Figure \ref{fig product}.
Double triangles can be reduced without changing the $c_2$~\cite{BSYc2}, see Figure \ref{fig double triangle reduction}.
Hourglass chains (for suitable kernels) have no such reductions. They establish families of the most complicated type, the {\em prime ancestors}~\cite{gfe, Scensus}.
Hourglass chains are the first families of prime ancestors for which
the $c_2$ invariant can be calculated. In a certain sense these hourglass chains can be considered as `telescopes' that enable us to look into geometries of
Feynman graphs at very high loop order (i.e., the number of independent cycles in $L-v$).
This has never been achieved before: all previous techniques were either restricted to the analysis of small graphs, or they worked in a way which was fundamentally prime-by-prime~\cite{CYgrid, Ycirc, Ystudy} and hence did not lead to non-trivial graph families with the same underlying geometries.

The paper is organized as follows: in Section~\ref{sec background} we provide the necessary background information on denominator reduction and the $c_2$ invariant.
Section~\ref{sec reductions} contains the proof of Theorem~\ref{mainthm}: the hourglass reductions.

Finally we use Theorem~\ref{mainthm} in Section~\ref{sec kernels} for an exhaustive search for $c_2$ invariants in hourglass chains with kernels of at most six internal vertices
(vertices which are not attached to the hourglasses, see Figure \ref{Fig:cases} and Table \ref{Tab:c2} for a maximum of five internal vertices).
We find Legendre symbols (see Section~\ref{sec background}) $(4/q)$ and $(-4/q)$ along with several modular forms. Explicitly the weight and level of the identified modular forms
(given in the notation [weight,level], see~\cite{BSmod,Sc2}) are [4,8], [4,16], [6,4], and [9,4]. The modular form [9,4] is a new addition
to the table of modular forms in $\phi^4$ theory as it was not found in $\phi^4$ graphs of loop orders less or equal twelve which were studied in~\cite{Sc2}.

An important outcome of this article is providing further support for the conjecture that in $\phi^4$ theory (corresponding to 4-regular graphs) the $c_2$ is free of curves
(which correspond to weight 2 modular forms), \cite[Conjecture~26]{BSmod}, see also~\cite{Sc2}. This puzzling conjecture is for the first time tested to any
loop order for some non-trivial geometries. It seems to be connected to some deep algebraic structure in quantum field theories. Note that curves in $c_2$ invariants are ubiquitous
if one lifts the (physical) restriction to 4-regular graphs.

For extra support of this ``no-curves-puzzle'' it might be worthwhile to study kernels in the future which lead out of $\phi^4$ graphs. Which non-$\phi^4$ kernels provide $c_2$ invariants
that correspond to weight two modular forms? It is also possible to extend the $c_2$-search to the rapidly increasing number of kernels with more than six internal vertices, see
Table~\ref{tab:numberkernels} and Question~\ref{quest}.

\section{Background}\label{sec background}

\subsection{Dodgsons}

In order to define the denominator reductions we first need to give a determinantal expression for $\Psi_G$ and define some related polynomials.

Assume for the rest of the paper that $G$ is a connected graph.

Choose an order on the edges $E(G)$ and vertices $V(G)$ of $G$ and choose a direction for each edge. Then the \emph{signed incidence matrix} of $G$ is a $|V(G)|\times|E(G)|$ matrix with entries $-1$, $0$, $1$, where the $i,j$th entry is $-1$ if edge $j$ starts at vertex $i$, is $1$ if edge $j$ ends at vertex $i$, and is $0$ otherwise. Let $E$ be the signed incidence matrix with one row removed. Since $G$ is connected the rank of the signed incidence matrix is $|V(G)|-1$ and so $E$ is full rank.

Define the \emph{expanded Laplacian} to be the matrix
\begin{gather*}
L_G =
\begin{bmatrix}
 \Lambda & E^t \\ E & 0
\end{bmatrix}\!,
\end{gather*}
where $\Lambda$ is the diagonal matrix with diagonal entries $\alpha_e$, $e\in E(G)$, in the edge order chosen above.
 This matrix is called the expanded Laplacian because it behaves very much like
the Laplacian (with a matching row and column removed), but the pieces of it have been expanded out into a larger block matrix.

\begin{Proposition}\label{prop psi mt}
 \begin{gather*}
 \Psi_G = (-1)^{|V(G)|-1}\det L_G.
 \end{gather*}
\end{Proposition}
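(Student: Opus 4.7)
The plan is to reduce the block matrix $L_G$ to a diagonal block form by a Schur complement manipulation, at which point the classical weighted matrix-tree theorem gives the result.

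First I would work in the field of rational functions $\mathbb{Q}(\alpha_1,\dots,\alpha_{|E(G)|})$ so that $\Lambda$ is invertible, and left-multiply $L_G$ by the unipotent block matrix
\begin{gather*}
U = \begin{bmatrix} I & 0 \\ -E\Lambda^{-1} & I \end{bmatrix},
\end{gather*}
which has determinant $1$. A short computation shows
\begin{gather*}
UL_G = \begin{bmatrix} \Lambda & E^t \\ 0 & -E\Lambda^{-1}E^t \end{bmatrix},
\end{gather*}
so that $\det L_G = \det(\Lambda)\,\det\bigl(-E\Lambda^{-1}E^t\bigr) = (-1)^{|V(G)|-1}\det(\Lambda)\det\bigl(E\Lambda^{-1}E^t\bigr)$, the sign coming from the fact that the lower-right block is of size $|V(G)|-1$.

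Next I would identify $E\Lambda^{-1}E^t$ as the reduced weighted Laplacian of $G$ with edge weights $1/\alpha_e$: the $(i,j)$ entry is $\sum_{e} E_{ie}E_{je}/\alpha_e$, which is exactly the standard Laplacian formula for those weights. By the weighted matrix-tree theorem,
\begin{gather*}
\det\bigl(E\Lambda^{-1}E^t\bigr) = \sum_{T}\prod_{e\in T}\frac{1}{\alpha_e},
\end{gather*}
where $T$ ranges over spanning trees. Multiplying by $\det(\Lambda) = \prod_{e}\alpha_e$ turns each tree contribution into $\prod_{e\notin T}\alpha_e$, which assembles into $\Psi_G$. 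Combining with the sign from the previous step yields $\Psi_G = (-1)^{|V(G)|-1}\det L_G$, first as an identity in the rational function field and then, since both sides are polynomials, as a polynomial identity.

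The only subtle point is that the Schur complement reduction is performed over the fraction field; there is no genuine obstacle here because the resulting equation is polynomial and hence extends by clearing denominators. If one wished to avoid inverting $\Lambda$ altogether, an alternative route is a direct Cauchy--Binet expansion of $\det L_G$, where the surviving minors are indexed by choices of $|V(G)|-1$ edges forming a spanning tree and the remaining $\alpha_e$ factors come from the diagonal block; this gives the same formula with the same sign bookkeeping, and would be the main thing to verify carefully in either approach.
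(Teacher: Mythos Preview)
Your proof is correct. The Schur complement reduction is valid over the fraction field, the identification of $E\Lambda^{-1}E^t$ with the reduced weighted Laplacian is right, and the passage back to a polynomial identity is unproblematic.

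The paper does not give a full proof but indicates a slightly different route: it invokes the form of the matrix-tree theorem stating that a $(|V(G)|-1)\times(|V(G)|-1)$ submatrix of $E$ has determinant $\pm 1$ when the chosen edges form a spanning tree and $0$ otherwise. This is exactly the Cauchy--Binet expansion you mention at the end as an alternative. So your secondary suggestion is the paper's primary one, and your primary argument (Schur complement plus weighted matrix-tree) is a repackaging that trades the combinatorial minor analysis for an appeal to the weighted Laplacian. The Cauchy--Binet route has the mild advantage of staying polynomial throughout and of making the connection to Dodgson polynomials (minors of $L_G$) more transparent, which is why the paper emphasises that version; your Schur complement route is cleaner bookkeeping for the sign and arguably more conceptual if one already has the weighted theorem in hand.
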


This proposition is at its core the matrix-tree theorem~\cite[Section~2.2]{Brbig}. More specifically, the form of the matrix tree theorem that is most useful here is the form that says given a~\mbox{$|V(G)-1|\times |V(G)-1|$} submatrix of $E$, this matrix has determinant $\pm 1$ if the edges corresponding to the columns of the submatrix form a spanning tree of $G$ and has determinant $0$ otherwise.

In the following we also need the extension to minors of $L_G$. Polynomials from minors of $L_G$ are called {\em Dodgson} polynomials in~\cite{Brbig}. In general, these polynomials
have signs which depend on the sequence of edges~\cite{Sc2}. While in classical denominator reduction signs are often insignificant, they play an important role in quadratic denominator
reduction. By the special structure of~(\ref{maineq}) it is sufficient for our purpose to define the sign of Dodgson polynomials in a~trivial case.\looseness=1

\begin{Definition}%\label{def dodgson}
For any subsets $I$, $J$, $K$ of the edges of a connected graph $G$ with $|I|=|J|$ we define the Dodgson polynomials $\Psi^{I,J}_K$ as
 \begin{gather*}
 \Psi^{I,J}_{G,K} = \pm\det L_G^{I,J}\big|_{\alpha_k=0,k\in K},
 \end{gather*}
where $L_G^{I,J}$ is $L_G$ with rows in $I$ and columns in $J$ deleted. In the case $I=J$ the sign is $(-1)^{|V(G)|-1}$.
\end{Definition}

The contraction-deletion formula~(see~\cite{Brbig}, \cite[Lemma~11]{Sc2})
\begin{gather}\label{cd}
\Psi^{I,J}_{G,K}=\alpha_e\Psi^{Ie,Je}_{G,K}+\Psi^{I,J}_{G,Ke}=\alpha_e\Psi^{I,J}_{G\backslash e,K}+\Psi^{I,J}_{G/e,K}
\end{gather}
relates Dodgson polynomials to minors. Note that in the context of Dodgson polynomials graphs may have multiple edges and self-loops (which contract to zero, i.e., every
Dodson polynomial of a graph with a contracted self loop vanishes).

\begin{Example}
A tree has the graph polynomial 1. The Dodgson polynomial of a circle $C$ is the sum of its edge-variables. In this case $\Psi_C^{e,f}=\pm1$ for any two edges $e,f\in E(G)$.
\end{Example}

We get the following vanishing cases (see~\cite{Brbig}, \cite[Section~2.2 (4)]{BSYc2},
\cite[Lemma 13]{Sc2}):
\begin{gather}
\Psi^{I,J}_{G,K}=0\quad \text{if $I$ or $J$ cut $G$,}\nonumber
\\
\Psi^{I,J}_{G,K}=0\quad \text{if $(I\cup K)\backslash J$ or $(J\cup K)\backslash I$ contain a cycle.}
\label{vanishingcases}
\end{gather}

An important Dodgson identity is (see~\cite{Brbig} and~\cite[Lemma 18]{Sc2})
\begin{gather}\label{d12}
\Psi^{1,1}_{G,2}\Psi^{2,2}_{G,1}-\Psi^{12,12}_G\Psi_{G,12}=\big(\Psi^{1,2}_G\big)^2
\end{gather}
for any two edges 1, 2 in $G$. Many more identities for Dodgson polynomials can be found in~\cite{Brbig,Sc2}.

\subsection{Spanning forest polynomials}

Dodgson polynomials can also be written in terms of spanning forests. To that end, given a~partition $P$ of a subset of the vertices of $G$, define the \emph{spanning forest polynomial}
\begin{gather*}
\Phi_G^P = \sum_{F}\prod_{e\not\in F}\alpha_e,
\end{gather*}
where the sum is over spanning forests with the property that there is a bijection between the parts of $P$ and the trees of the forest such that every vertex in a part of the partition
is in the corresponding tree.

For example, given the graph illustrated in Figure~\ref{fig sp for eg}, with the partition $P$ indicated by the shape of the large vertices, the corresponding spanning forest polynomial is $\alpha_3(\alpha_4\alpha_2 + \alpha_1\alpha_2+\alpha_1\alpha_5+\alpha_2\alpha_5)$. This technique of marking the partition by the shape of the large vertices will be used without further comment in the main hourglass chain reduction argument.
\begin{figure}[h]
\centering
 \includegraphics[scale=1.25]{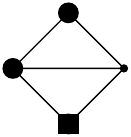}
 \put(-40,35){\makebox(0,0)[lb]{$4$}}
 \put(-10,35){\makebox(0,0)[lb]{$1$}}
 \put(-40,7){\makebox(0,0)[lb]{$3$}}
 \put(-10,7){\makebox(0,0)[lb]{$2$}}
 \put(-27,12){\makebox(0,0)[lb]{$5$}}
 \caption{An example of a graph with a partition of some of the vertices marked by the shape of the large vertices.}\label{fig sp for eg}
\end{figure}

We have the following proposition
\begin{Proposition}\label{prop d to s}
\begin{gather*}
\Psi_{G,K}^{I,J} = \sum f_k \Phi_{G\backslash (I\cup J\cup K)}^{P_k},
\end{gather*}
where the $P_k$ run over partitions of the ends of $I$, $J$, and $K$, and $f_k\in \{-1, 0, 1\}$.
Furthermore, those $f_k$ which are nonzero are exactly those where each forest in the polynomial becomes a tree in $G\backslash I /(J\cup K)$ and in $G\backslash J /(I\cup K)$; if any forest in the polynomial has this property then all of them do.
\end{Proposition}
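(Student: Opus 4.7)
My plan is to prove the formula by induction on $|E(G)\setminus(I\cup J\cup K)|$, using the contraction-deletion identity (\ref{cd}) at each step and grounding the base case in the all-minors matrix-tree theorem. A more conceptual route would Laplace-expand $\det L_G^{I,J}|_{\alpha_K=0}$ along the edge-rows, reading each non-zero term as a monomial in $\alpha$ times a minor of the reduced signed incidence matrix $E$; the inductive route is cleaner because contraction-deletion already interacts nicely with the spanning-forest bookkeeping.

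For the base case, one considers $E(G)=I\cup J\cup K$, in which case $\Psi^{I,J}_{G,K}$ is a scalar obtained from a minor of $E$, and $G\setminus(I\cup J\cup K)$ has only isolated vertices, so its spanning forest polynomials with partition $P_k$ are just the indicator of the partition being the trivial one into singletons (matching the prescribed partition $P_k$ of the ends of $I$, $J$, $K$). The all-minors matrix-tree theorem then gives the value $\pm 1$ precisely when the resulting assignment of incidence rows to vertex columns is consistent with a spanning tree structure after contracting $J\cup K$ and separately after contracting $I\cup K$, which translates to the stated characterization. For the inductive step, pick $e\in E(G)\setminus(I\cup J\cup K)$ and apply (\ref{cd}), giving $\Psi^{I,J}_{G,K}=\alpha_e\,\Psi^{I,J}_{G\setminus e,K}+\Psi^{I,J}_{G/e,K}$. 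Each term on the right has, by the inductive hypothesis, a spanning-forest expansion on the corresponding smaller graph. The combinatorial glue is the obvious bijection: a spanning forest $F$ of $G\setminus(I\cup J\cup K)$ realizing a partition $P$ either fails to use $e$, in which case $F$ contributes $\alpha_e\prod_{e'\notin F\cup\{e\}}\alpha_{e'}$ via the first term with the same partition $P$; or uses $e$, in which case $F/e$ is a spanning forest of $(G/e)\setminus(I\cup J\cup K)$ realizing the partition $P'$ obtained by merging the two parts of $P$ that contain the endpoints of $e$, contributing via the second term. Summing over all such $F$ reassembles the desired expansion of $\Psi^{I,J}_{G,K}$.

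The ``furthermore'' clause is then read off as follows: by the vanishing cases (\ref{vanishingcases}), a spanning forest that does not become a spanning tree after contracting $J\cup K$ (resp.\ $I\cup K$) forces a Dodgson polynomial in the inductive unfolding to vanish, killing the coefficient $f_k$; conversely, the matrix-tree theorem in the base case provides a $\pm 1$ whenever both contracted structures are trees. The ``all-or-none'' statement is automatic, since every monomial in a single $\Phi^{P_k}_{G\setminus(I\cup J\cup K)}$ corresponds to a spanning forest with the \emph{same} partition $P_k$, hence all such forests contract to spanning trees simultaneously or none do. The main obstacle I anticipate is pinning down the signs: the sign of $\Psi^{I,J}_{G,K}$ from $\det L_G^{I,J}$ depends on a choice of row/column ordering, and one must verify that the deletion branch and the contraction branch assign the \emph{same} sign to each $\Phi^{P_k}$. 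Handling this cleanly requires fixing an edge/vertex ordering that is compatible with both the contraction and deletion operations on $e$, and tracking the parity shift when $e$ is contracted (which adjusts the signed incidence matrix via an elementary row operation). Once this sign convention is locked down, the induction closes and the non-vanishing criterion drops out.
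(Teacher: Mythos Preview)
Your inductive approach via contraction--deletion is a different route from the paper's. The paper does not give a self-contained proof but cites \cite{BrY} and sketches a direct argument: expand $\det L_G^{I,J}|_{\alpha_K=0}$ term by term, and for each monomial interpret the surviving minor of $E$ (and of $E^t$) via the all-minors matrix-tree theorem, which yields the simultaneous spanning-tree condition in $G\backslash I/(J\cup K)$ and $G\backslash J/(I\cup K)$ in one stroke. Your induction reconstructs this same expansion one edge at a time. Both land on the same characterization; the direct expansion makes the non-vanishing criterion and the ``all-or-none'' clause immediate (it depends only on $P_k$, not on the particular forest), while your route has the virtue of staying within the Dodgson/forest-polynomial language already set up in the paper.

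Two issues, one minor and one substantive. First, in your bijection the ``merging'' clause is wrong: if $e\in F$ then the two endpoints of $e$ lie in the \emph{same} tree of $F$, so there are no two parts of $P$ to merge; the partition is simply unchanged (and the endpoints of $e$ need not be ends of $I$, $J$, $K$ at all). Second, and more seriously, you correctly identify signs as the main obstacle but do not actually discharge it. This is not cosmetic: contraction of $e$ changes the vertex set, so $\Psi^{I,J}_{G/e,K}$ is a minor of a \emph{different} expanded Laplacian, and the paper only fixes the sign convention in the case $I=J$. Showing that a given $\Phi^{P_k}$ receives a consistent $\pm1$ across all deletion/contraction branches of the induction is precisely the content of \cite[Propositions~8 and~12]{BrY}, and it requires an explicit sign computation rather than a choice of convention. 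As written your proposal is a sound outline, but the sign consistency remains a genuine gap rather than a routine bookkeeping step.
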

This proposition is a particular interpretation of the all minors matrix tree theorem, for a~proof see~\cite[Propositions 8 and 12]{BrY}. To see the connection to the matrix tree theorem briefly, consider a term in the determinant giving $\Psi_{G,K}^{I,J}$. The variables indicate edges of $G$, where the corresponding columns are removed in~$E$ and corresponding rows are removed in~$E^t$. The rows indexed by $I$ are removed in~$E^t$ but not in~$E$ and the columns indexed by~$J$ are removed in~$E$ but not in~$E^t$. By the matrix tree theorem as summarized after Proposition~\ref{prop psi mt}, both these sets of edges must simultaneously be spanning trees. Furthermore the variables indexed by~$K$ are set to~$0$. So they must not be in the monomial and hence must be in the trees.
An edge which must be in a tree can be contracted, and splitting apart vertices which were identified via a contraction splits the tree into a forest with constraints on which vertices belong in which tree of the forest. Working out the details gives the result, see~\cite{BrY} for details.

The signs in the spanning forest polynomial expansion of a Dodgson polynomial can be tricky, however, the only case we will need for the argument below is given in the following lemma.
\begin{Lemma}\label{lem easy sign}
 With notation as in Proposition~$\ref{prop d to s}$, if $P_1$ and $P_2$ both have nonzero coefficients, and $P_1$ and $P_2$ differ by swapping two vertices which are in the same component of $J$ viewed as a subgraph of $G$, then $f_1 = -f_2$.
\end{Lemma}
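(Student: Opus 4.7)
The plan is to unpack the sign $f_k$ from the Leibniz expansion of the determinant defining $\Psi^{I,J}_{G,K}$ and to exhibit the swap of $u_1,u_2$ in the partition as a single sign flip arising from the antisymmetric structure of the signed incidence matrix along a path in $J$.

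First, I would recall how the sign $f_k$ in Proposition \ref{prop d to s} arises. By definition $\Psi^{I,J}_{G,K} = \pm \det L_G^{I,J}|_{\alpha_K = 0}$; in the Leibniz expansion, the monomial $\prod_{e\notin F\cup I\cup J\cup K}\alpha_e$ corresponds to picking the diagonal entries of the $\Lambda$-block for each edge outside $F\cup I\cup J\cup K$, leaving a residual signed determinant built from the incidence entries of $E$ and $E^t$ restricted to the rows/columns indexed by the edges of $F\cup I\cup J\cup K$ and by the vertices. The sign $f_k$ is then read from the Leibniz permutation signs contributing to this residual determinant, uniformly over all spanning forests $F$ compatible with $P_k$ (this uniformity is part of the content of Proposition \ref{prop d to s} and is established in \cite{BrY}).

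Next, I would exploit the structure of a fixed path $u_1 = w_0, w_1, \ldots, w_n = u_2$ in the subgraph $(V(G),J)$, with edges $j_1,\ldots,j_n \in J$. The rows of $L_G^{I,J}$ indexed by the $J$-edges $j_i$ carry zeros in every edge-column (the $\Lambda$-block is diagonal and the columns $j_i\in J$ are deleted) and have only two nonzero entries each, namely $+1$ and $-1$, at the vertex-columns corresponding to the endpoints of $j_i$. Performing the signed row-combination along the path makes all intermediate vertex contributions telescope, leaving only $+1$ at vertex-column $u_1$ and $-1$ at vertex-column $u_2$.

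Finally, I would argue that swapping $u_1$ and $u_2$ in the partition $P$ corresponds, at the level of the residual determinant, to exchanging the two roles in the matching of the $J$-edge rows to these vertex-columns along the path. Because the contributions at $u_1$ and $u_2$ arrive with opposite signs, this exchange flips exactly one sign in the Leibniz permutation, yielding $f_1 = -f_2$. The base case $n=1$ (where $u_1,u_2$ are joined by a single $J$-edge) is immediate from the opposite signs of $\pm 1$ in the corresponding row of $E^t$; the general case follows from the telescoping cancellation described above. The main obstacle will be the careful sign bookkeeping needed to convert the partition-side swap into a specific transposition in the Leibniz permutation, especially when the edges in $I$ and $K$ introduce additional structure in the residual determinant. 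A secondary technical hurdle is to confirm that when $F$ ranges over all spanning forests compatible with a given $P_k$ the sign contribution is identical, so that $f_k$ is a single well-defined sign; this is already contained in Proposition \ref{prop d to s}, but must be invoked explicitly for the argument to close.
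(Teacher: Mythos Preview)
The paper does not prove this lemma at all; it simply records that the statement is a special case of \cite[Corollary~17]{BrY}. Your proposal therefore goes well beyond what the paper does: you sketch an actual argument from the Leibniz expansion of $\det L_G^{I,J}$, using that each $J$-edge row of the residual matrix has exactly two nonzero entries $\pm 1$ at its endpoint vertex-columns, and that telescoping along a $J$-path from $u_1$ to $u_2$ produces a single sign-carrying row with opposite entries at $u_1$ and $u_2$. This is indeed the mechanism behind the result in \cite{BrY}, so your direction is sound.

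Where your sketch remains genuinely incomplete is the step you yourself flag: you assert that ``swapping $u_1$ and $u_2$ in the partition corresponds, at the level of the residual determinant, to exchanging the two roles in the matching of the $J$-edge rows to these vertex-columns,'' but you do not actually establish this correspondence. The partition $P_k$ governs which spanning forests contribute to $\Phi^{P_k}$; the sign $f_k$ arises from how the forest edges, together with $I$, $J$, $K$, determine a pair of spanning trees in the two contracted/deleted graphs, and the delicate point is to show that, for \emph{every} forest $F$ compatible with $P_1$, there is a matching forest compatible with $P_2$ whose sign contribution differs by exactly one transposition. Your telescoping row-combination is a determinant-preserving operation, so by itself it does not yet produce a sign change; one still needs to identify precisely which permutation in the Leibniz sum is being compared before and after the swap, and why the comparison reduces to a single transposition rather than something more complicated when $I$ and $K$ are nonempty. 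In \cite{BrY} this is handled by an explicit sign formula (their Proposition~16) from which Corollary~17 is read off; reproducing that formula, or an equivalent, is the missing ingredient in your outline.
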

This is a special case of~\cite[Corollary~17]{BrY}.

Another useful observation, see~\cite[Proposition 21]{BrY}, is that if $G$ is formed as the 2-sum of~$G_1$ and $G_2$, that is $G_1$ and $G_2$ each have a distinguished edge, $e_1$ and $e_2$ respectively, and $G$ is the result of identifying $e_1$ and $e_2$ and then removing this new identified edge while leaving the induced identifications on the incident vertices, then
\begin{gather}\label{eq 2 cut}
 \Psi_G = \Psi_{G_1\backslash e_1}\Psi_{G_2/e_2} + \Psi_{G_1/e_1}\Psi_{G_2\backslash e_2} = \Psi_{G_1\backslash e_1}\Phi_{G_2\backslash e_2}^{\{v_1\}, \{v_2\}} + \Phi_{G_1\backslash e_1}^{\{v_1\}, \{v_2\}}\Psi_{G_2\backslash e_2},
\end{gather}
where $v_1$ and $v_2$ are the ends of $e_1$ and $e_2$.

Note that all Dodgson polynomials and all spanning forest polynomials are explicitly linear in all their variables.

\subsection[The c2 invariant]
{The $\boldsymbol{c_2}$ invariant}\label{sec c2}

A general aim in the mathematical theory of Feynman periods is to understand what kind of numbers can appear~\cite{BK,Scensus,Snumfunct}. The interest in this topic was recently intensified by
the (conjectural) discovery of a Galois coaction structure on these numbers~\cite{Bcoact1,Bcoact2, PScoaction}. In general, the Feynman period (\ref{Pdef}) is hard to analyze.
Even the zero locus of the graph hypersurface $\Psi_G=0$ has a complicated geometric structure.

The number theoretic content of the Feynman period is intimately related to the motivic structure of its integral. In general, the motivic setup is a deep superstructure to the cohomology theory of
integrals in algebraic geometry. Going back to ideas of A. Grothendieck it lifts Galois theory to higher dimensions (for first reading we recommend~\cite{intmot}).
This motivic structure unifies all fields and thus finite fields $\FF_q$ encapture information on the geometry of the
Feynman period. The motivic information can be extracted from the number of elements on the singular locus of the integrand.
This is, e.g., visible in the action of the Frobenius homomorphism in the classical Lefschetz fixed point theorem~\cite{Lef}.
While the knowledge of this point-count for a single $q$ is still not very informative, its value for all (or many) $q$ carries important number theoretical information on the
({\em framing} of the) Feynman period~\cite{BD}.

We define
\begin{gather*}
[F]_q=|\{F=0\text{ in }\FF_q\}|
\end{gather*}
as the point-count of the zero locus of the polynomial $F$. In the context of Feynman periods the important information of the point-count is hidden in the first non-trivial reduction modulo $q$.
For any connected graph $G$ with at least three edges we define~\cite{SFq}
\begin{gather}\label{c2def}
c_2^{(q)}(G)\equiv\frac{[\Psi_G]_q}{q^2}\mod q
\end{gather}
as the $c_2$ invariant of the Feynman graph $G$. The above definition implies that the point-count of the graph hypersurface is always divisible by $q^2$ (the index 2 in $c_2$ refers to this square).
For a given graph $G$ one should think of the $c_2$ as the infinite sequence $\big(c_2^{(q)}\big)_{q=2,3,4,5,7,8,9,11,\dots}$ of remainders modulo $q$.

The benefit of the reduction modulo $q$ is that the point-count is combinatorially quite accessible. In practice, non-trivial prime powers are still harder to come by,
so that the $c_2$ invariant is often studied for pure primes only. It is conjectured \cite[Conjecture~2]{Sc2}, that the knowledge of the $c_2$ for all primes determines
the $c_2$ for all prime powers.

The $c_2$ has been studied quite deeply in the context of $\phi^4$ quantum field theory (Section~\ref{sec phi4}). The focus of these studies can either be
the general mathematical structure of the $c_2$~\cite{BSYc2,HSSYc2,Yscompl} or the zoology of the geometries identified by $c_2$s~\cite{K3,BSmod,CYgrid,Sc2,Ycirc,Ystudy}.
The nature of this article is more in the latter direction, particularly when we analyze the $c_2$s of small kernels in Section~\ref{sec kernels}. We would like to emphasize
that although identifying $c_2$s can have an experimental flavour it might be of high importance to understanding the algebraic structure of quantum field theories.

We can get rid of the division by $q^2$ in (\ref{c2def}) by using Dodgsons instead of the graph polynomial (see~\cite[Corollary~28 and Theorem~29]{K3})
\begin{gather*}%\label{c2def2}
c_2^{(q)}(G)\equiv-\big[\Psi^{13,23}_G\Psi^{1,2}_{G,3}\big]_q\mod q
\end{gather*}
for every connected graph $G$ with a degree 3 vertex $v$.
In this version the $c_2$ can be further simplified by denominator reductions and quadratic denominator reductions as will be outlined in the next sections.

\subsection{Denominator reduction}\label{sec dr}

Successive integration of the Feynman period (\ref{Pdef}) leads to denominators which are linear in the next integration variable. After three initial steps the new denominator
will be the resultant of the old denominator with respect to the integration variable. Eventually the denominator may cease to factor into linear pieces and then one typically enters
very complicated territory. This successive taking of resultants has a point-count version which says that under certain conditions
\begin{gather*}
[(A\alpha+B)(C\alpha+D)]_q\equiv-[AD-BC]_q\mod q.
\end{gather*}
If the resultant $AD-BC$ factors in some new variable then the reduction can be repeated. The size of the polynomial that has to be counted reduces rapidly and when no more reduction
is possible then one can still resort to brute force counting at the last step. More precisely, we obtain the following result.

\begin{Definition}[{denominator reduction \cite[Definition~120 and~Proposition 126]{Brbig}}]\label{defdr}
Given a~connected graph $G$ with at least three edges and a sequence of edges $1,2,\dots,{|E(G)|}$ we define
\begin{gather*}
^3\Psi_G(1,2,3)=\pm\Psi^{13,23}_G\Psi^{1,2}_{G,3}.
\end{gather*}
Suppose $^n\Psi_G$ for $n\geq3$ factors as
\begin{gather*}
^n\Psi_G(1,\dots,n)=(A\alpha_{n+1}+B)(C\alpha_{n+1}+D)
\end{gather*}
then we define
\begin{gather*}
^{n+1}\Psi_G(1,\dots,n+1)=\pm(AD-BC).
\end{gather*}
Otherwise denominator reduction terminates at step $n$. If it exists we call $^n\Psi_G$ an $n$-invariant of $G$.
\end{Definition}
Note that the $n$-invariants are only defined up to sign. The 4-invariant always factorizes \cite[Lemma~82]{Brbig}
\begin{gather}\label{4inv}
^4\Psi_G=\pm\Psi^{14,23}_G\Psi^{13,24}_G.
\end{gather}
Therefore the 5-invariant always exists. In Lemma 87 of~\cite{Brbig} it is proved that for $n\geq5$ the $n$-invariants become independent of the sequence of the reduced edges
(they only depend on the set of reduced variables). Denominator reduction is compatible with the $c_2$ invariant in the following sense.

\begin{Theorem}[{\cite[Theorem~29]{K3}}]\label{thmdr}
Let $G$ be a connected graph with at least three edges and $h_1(G)\leq|E(G)|/2$ independent cycles. Then
\begin{gather*}%\label{c2frominvariant}
c_2^{(q)}(G)\equiv (-1)^n [^n\Psi_G]_q\mod q
\end{gather*}
whenever $^n\Psi_G$ exists for $n<|E(G)|$.
\end{Theorem}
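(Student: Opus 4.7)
The plan is to prove Theorem~\ref{thmdr} by induction on $n\geq 3$.

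For the base case $n=3$, the identity
\begin{gather*}
c_2^{(q)}(G)\equiv -\big[\Psi^{13,23}_G\Psi^{1,2}_{G,3}\big]_q\mod q
\end{gather*}
stated just before Section~\ref{sec dr}, together with $^3\Psi_G=\pm\Psi^{13,23}_G\Psi^{1,2}_{G,3}$ from Definition~\ref{defdr}, immediately yields $c_2^{(q)}(G)\equiv (-1)^3[^3\Psi_G]_q\mod q$, since the zero loci of $F$ and $-F$ coincide and so point counts are insensitive to the overall sign.

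For the inductive step, assume $c_2^{(q)}(G)\equiv (-1)^n[^n\Psi_G]_q\mod q$ and that $^n\Psi_G$ factors as $(A\alpha_{n+1}+B)(C\alpha_{n+1}+D)$ with $A,B,C,D$ polynomials in $\alpha_{n+2},\dots,\alpha_{|E(G)|}$, so that $^{n+1}\Psi_G=\pm(AD-BC)$. The crux of the argument is the point-count reduction lemma
\begin{gather*}
[(A\alpha_{n+1}+B)(C\alpha_{n+1}+D)]_q\equiv -[AD-BC]_q\mod q,
\end{gather*}
which I would establish by freezing the variables $\alpha_{n+2},\dots,\alpha_{|E(G)|}$ and enumerating solutions $\alpha_{n+1}\in\FF_q$ by cases. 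When $A,C$ are both nonzero the resulting quadratic has discriminant $(AD-BC)^2$, a perfect square, giving $1$ or $2$ roots according as $AD=BC$ or not. When exactly one of $A,C$ vanishes one gets a linear polynomial multiplied by a constant, yielding $1$ solution or $q$ solutions depending on whether that constant is nonzero. When both $A,C$ vanish the product is the constant $BD$, giving either $q$ solutions or $0$. Summing over the remaining variables, every ``$q$-solution'' stratum contributes a multiple of $q$ that drops modulo $q$, and a short inclusion–exclusion comparison against $[AD-BC]_q=|\{AC\ne 0,\,AD=BC\}|+|\{A=0,C\ne 0,B=0\}|+|\{A\ne 0,C=0,D=0\}|+|\{A=0,C=0\}|$ rearranges the surviving contributions into $-[AD-BC]_q\bmod q$. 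Applying the lemma gives $[^n\Psi_G]_q\equiv -[^{n+1}\Psi_G]_q\mod q$, hence
\begin{gather*}
(-1)^n[^n\Psi_G]_q\equiv (-1)^{n+1}[^{n+1}\Psi_G]_q\mod q,
\end{gather*}
closing the induction.

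The requirement $n<|E(G)|$ guarantees there is at least one remaining variable $\alpha_{n+1}$ to eliminate at each reduction step, and the hypothesis $h_1(G)\le|E(G)|/2$ is exactly the degree condition ensuring $[\Psi_G]_q$ is divisible by $q^2$ so that $c_2^{(q)}(G)$ is well defined; the same hypothesis also bounds the degrees of the $A,B,C,D$ arising in successive reductions, so that Chevalley--Warning furnishes the $q$-divisibility of the ``correction'' counts $|\{AC=0\}|$ needed to close the elementary bookkeeping. The main obstacle, and the only genuinely non-trivial point, is precisely this last bookkeeping: verifying that every degenerate stratum in the case analysis (the loci where the first or second factor becomes identically zero in $\alpha_{n+1}$) collects into a multiple of $q$, so that the congruence $[(A\alpha+B)(C\alpha+D)]_q\equiv -[AD-BC]_q\bmod q$ holds on the nose. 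Once this elementary but fiddly identity is in hand, the induction runs mechanically and delivers Theorem~\ref{thmdr}.
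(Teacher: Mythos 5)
Your proposal follows essentially the same route as the source this paper relies on: Theorem~\ref{thmdr} is not proved here but quoted from \cite[Theorem~29]{K3}, and the argument there is precisely an induction driven by the resultant identity $[(A\alpha+B)(C\alpha+D)]_q\equiv-[AD-BC]_q\bmod q$, valid only under a degree condition supplied by $h_1(G)\le|E(G)|/2$. Your case analysis is correct and you correctly name the one genuine input, the $q$-divisibility of $[AC]_q$; so the skeleton is right. Three points should be sharpened. First, you locate the difficulty in the wrong stratum: the degenerate loci (where a factor is identically zero in $\alpha_{n+1}$) each contribute exactly $q$ solutions and vanish modulo $q$ for free. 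The problem is the stratum $\{AC\neq0\}$, where the left-hand count is $2|\{AC\neq0\}|-|\{AC\neq0,\,AD=BC\}|$ while $[AD-BC]_q$ receives only $|\{AC\neq0,\,AD=BC\}|$ from it; summing all strata, the two sides of the congruence differ by $2q^{N}-[AC]_q\equiv-[AC]_q\bmod q$ (with $N$ the number of surviving variables). One must then actually check the degree bound $\deg(AC)=\deg\big({}^n\Psi_G\big)-2\le 2h_1(G)-n-2<|E(G)|-n-1$, which is equivalent to $2h_1(G)\le|E(G)|$, and one needs the Ax strengthening of Chevalley--Warning to get divisibility by $q$ rather than merely by $p$. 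Second, your base case quotes the identity $c_2^{(q)}(G)\equiv-\big[\Psi^{13,23}_G\Psi^{1,2}_{G,3}\big]_q$ in the form given in Section~\ref{sec c2}, which is stated there only for the three edges at a degree-$3$ vertex, whereas the theorem asserts the $n=3$ case for an arbitrary ordered triple of edges; the general triple needs its own derivation from $[\Psi_G]_q/q^2$ via the Dodgson identity \eqref{d12} (this is \cite[Corollary~28]{K3}) and is the genuinely non-elementary part of the whole statement. Third, the role of $n<|E(G)|$ is not that a variable remains to be eliminated but that the final application of the resultant lemma still has $N\ge1$ surviving variables, so that $q^{N}\equiv0\bmod q$. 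With those repairs your induction closes and agrees with the cited proof.
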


The theorem was proved for $\geq5$ edges in~\cite{K3}. The proof trivially extends to the case of three or four edges.
If $^n\Psi_G=0$ for some sequence of edges and some $n$ (and hence for all subsequent~$n$) then $G$ has {\em weight drop}. In this case the $c_2$ invariant vanishes~\cite{BrY}.

\subsection{Quadratic denominator reduction}\label{sec qdr}

Only in a few particularly simple cases does denominator reduction go through to the very end where all variables are reduced. Brute force point-counting after the last step of denominator
reduction can be very time consuming and does not lend itself to more theoretical understan\-ding. Therefore it is desirable to continue the reduction as far as possible. The integration of a~denominator that does not factor produces a square root. The existence of further reduction steps is suggested by the fact that even in the presence of squares and square roots integrals may stay
rational in a geometrical sense. Let us exemplify this by the following toy integrals~\cite{Sc2},
\begin{gather*}
\int_0^\infty\frac{\dd\alpha}{A\alpha^2+B\alpha+C}=\frac{\log(X)}{\sqrt{B^2-4AC}},
\end{gather*}
for some algebraic expression $X$ in $A$, $B$, $C$ and
\begin{gather*}
\int_0^\infty\frac{\dd\alpha}{\sqrt{D\alpha^2+E\alpha+F}(H\alpha+J)}=\frac{\log(Y)}{\sqrt{DJ^2-EHJ+FH^2}}
\end{gather*}
for some algebraic expression $Y$ in $D$, $E$, $F$, $H$, $J$.

For the formal implementation of this idea we pass from point-counts to Legendre sums.

\begin{Definition}[{\cite[Definition 29]{Sc2}}]\label{deflege}
Let $q$ be an odd prime power. For any $a\in\FF_q$ the Legendre symbol $(a/q)\in\{-1,0,1\}$ is defined by
\begin{gather*}%\label{legedef}
\bigg(\frac{a}{q}\bigg)=\big|\big\{x\in\FF_q\colon x^2=a\big\}\big|-1.
\end{gather*}
For any polynomial $F\in\ZZ[\alpha_1,\dots,\alpha_N]$ we define
\begin{gather*}%\label{Fqdef}
(F)_q=\sum_{\alpha\in\FF_q^N}\bigg(\frac{F(\alpha)}{q}\bigg),
\end{gather*}
where the sum is in $\ZZ$.
\end{Definition}
The Legendre symbol is multiplicative, $(ab/q)=(a/q)(b/q)$ for $a,b\in\FF_q$ and trivial for squares $\big(a^2/q\big)=1-\delta_{a,0}$, where the delta is the characteristic function on $\FF_q$.
This leads to $\big(F^2\big)_q=q^N-[F]_q$ for any polynomial $F$ in $N$ variables. If $N\geq1$ we get
\begin{gather*}%\label{point2Legendre}
[F]_q\equiv-\big(F^2\big)_q\mod q.
\end{gather*}
With the above equation we can translate point-counts to Legendre sums. Quadratic denominator reduction knows two cases,
\begin{gather*}
\big(\big(A\alpha^2+B\alpha+C\big)^2\big)_q\equiv-\big(B^2-4AC\big)_q\mod q,
\\
\big(\big(D\alpha^2+E\alpha+F\big)(H\alpha+J)^2\big)_q\equiv-\big(DJ^2-EHJ+FH^2\big)_q\mod q
\end{gather*}
if the total degree of the polynomials on the left hand sides does not exceed twice the number of their variables. The proof of these identities is in~\cite[Section~7]{Sc2}.
It uses a Chevalley--Warning-Ax theorem for double covers of affine space which is proved by F.~Knop in~\cite[Appendix]{Sc2}. As~examples, $(W\alpha+X)^3(Y\alpha+Z)$ reduces by
the second case to zero whereas $\big(U\alpha^2+V\alpha\allowbreak+W\big)\big(X\alpha^2+Y\alpha+Z\big)$ or $(U\alpha+V)(W\alpha+X)(Y\alpha+Z)$ do not reduce in general.

Note that in the case that both quadratic reductions are applicable one is back to the case of standard denominator reduction. Then both reductions lead to the same result.

We define quadratic $n$-invariants $^n\Psi^2_G$ in analogy to Definition \ref{defdr}.

\begin{Definition}[{quadratic denominator reduction~\cite[Definition 34]{Sc2}}]%\label{defqdr}
{\sloppy
Given a connected graph~$G$ with at least three edges and a sequence of edges $1,2,\dots,{|E(G)|}$ we define
\begin{gather}\label{n3}
^3\Psi^2_G(1,2,3)=\big(^3\Psi_G(1,2,3)\big)^2.
\end{gather}}\noindent
Suppose $^n\Psi^2_G$ for $n\geq3$ is of the form
\begin{gather}\label{case1}
^n\Psi^2_G(1,\dots,n)=\big(A\alpha_{n+1}^2+B\alpha_{n+1}+C\big)^2
\end{gather}
then we define
\begin{gather*}
^{n+1}\Psi^2_G(1,\dots,n+1)=B^2-4AC.
\end{gather*}
Suppose $^n\Psi^2_G$ is of the form
\begin{gather}\label{case2}
^n\Psi^2_G(1,\dots,n)=\big(D\alpha_{n+1}^2+E\alpha_{n+1}+F\big)(H\alpha_{n+1}+J)^2
\end{gather}
then we define
\begin{gather*}
^{n+1}\Psi^2_G(1,\dots,n+1)=DJ^2-EHJ+FH^2.
\end{gather*}
Otherwise quadratic denominator reduction terminates at step $n$. If it exists we call $^n\Psi^2_G$ a~quadratic $n$-invariant of $G$.
If $^n\Psi^2_G=0$ for some sequence of edges and some $n$ then we say that~$G$ has weight drop.
\end{Definition}

Note that quadratic $n$-invariants have no sign ambiguity. The connection to the $c_2$ invariant is similar to the standard case, with a restriction to $q=2$ or odd prime powers $q$.
\begin{Theorem}[{\cite[Theorem~36 and Remark 37]{Sc2}}]\label{thmqdr}
Let $q$ be an odd prime power and $G$ be a connected graph with at least three edges and $h_1(G)\leq|E(G)|/2$ independent cycles. Then
\begin{gather*}%\label{c2fromquadratic}
c_2^{(q)}(G)\equiv (-1)^{n-1}\big({}^n\Psi^2_G\big)_q\mod q
\end{gather*}
whenever $^n\Psi^2_G$ exists. If $^n\Psi^2_G\equiv0\mod2$ then $c_2^{(2)}(G)\equiv 0\mod 2$.
\end{Theorem}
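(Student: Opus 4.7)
The plan is to apply Theorem~\ref{thmqdr} after carrying out enough quadratic denominator reductions on $L-v$ to eliminate every edge of the hourglass chain, leaving only an expression in the edges of $K'$ (together with the two distinguished edges $1$ and $2$). Since the hourglass chain has a highly repetitive structure, I would set up an inductive \emph{hourglass reduction step}: a block of edges belonging to one hourglass is eliminated by a fixed short sequence of quadratic reductions, and the result is a quadratic $n$-invariant of the same form but with one fewer hourglass in the chain.

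First I would fix an edge order starting from the far end of the hourglass chain (opposite to the kernel), perform the initial three edges of denominator reduction to pass to a Dodgson expression $\pm\Psi^{13,23}_{L-v}\Psi^{1,2}_{L-v,3}$ via Definition~\ref{defdr} and~(\ref{n3}), and rewrite everything in terms of spanning forest polynomials using Proposition~\ref{prop d to s}. The 2-separations inherent in an hourglass (a pair of edges whose removal disconnects the chain) give a factorization of Dodgson polynomials by~(\ref{eq 2 cut}), and this factorization is exactly what produces either a perfect-square factor $(H\alpha+J)^2$ as in~(\ref{case2}) or a double-square as in~(\ref{case1}). Iteratively applying the quadratic reduction rules eliminates the four edges of a single hourglass and reproduces the same structural pattern on the shorter chain, using the Dodgson identity~(\ref{d12}) to identify the resulting quadratic with a new Dodgson polynomial on the shorter graph. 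I would only need to verify this inductive step once, and the requirement that the chain has length at least six ensures there is enough room on both sides of $v$ to run the reduction without interference from the kernel or from the vertex deletion.

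After all chain edges are reduced away, what remains depends only on the kernel endpoints, and by the bookkeeping of spanning forest polynomials and~(\ref{eq 2 cut}) it will have the form $\alpha_1(\Psi^{1,2}_{K'})^2\Psi^{2,2}_{K'}\Psi_{K',2}$ up to overall sign and up to squares (which are invisible after passing to the Legendre sum). Theorem~\ref{thmqdr} then gives the congruence~(\ref{maineq}); the parity of $n-1$ in the theorem is absorbed into the sign, which I would pin down using Lemma~\ref{lem easy sign} and the trivial-case sign convention in the definition of $\Psi^{I,J}_{G,K}$. For the $q=2$ statement, the same reduction sequence produces, at an intermediate stage, a polynomial with an overall factor of $2$ arising from adding the two sides of an hourglass symmetrically; this yields $^n\Psi^2_{L-v}\equiv 0\pmod 2$, i.e.\ weight drop, and the second clause of Theorem~\ref{thmqdr} concludes $c_2^{(2)}(L-v)\equiv 0\pmod 2$.

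The main obstacle is the combinatorial core: verifying the single inductive hourglass-reduction step with complete control over signs and showing that at each stage the current quadratic $n$-invariant genuinely has the form~(\ref{case1}) or~(\ref{case2}) required to continue the reduction. This requires careful spanning-forest bookkeeping across the two-edge cut separating an hourglass from the rest of the chain, and Lemma~\ref{lem easy sign} together with the vanishing cases~(\ref{vanishingcases}) will do most of the work. Once this one lemma is in place, the remainder of the argument is a clean induction on the length of the chain terminating at the kernel polynomials of $K'$.
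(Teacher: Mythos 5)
There is a fundamental mismatch here: what you have written is not a proof of the statement at all. The statement to be proved is Theorem~\ref{thmqdr} itself, namely the compatibility of quadratic denominator reduction with the $c_2$ invariant --- that $c_2^{(q)}(G)\equiv(-1)^{n-1}\big({}^n\Psi^2_G\big)_q \bmod q$ whenever the quadratic $n$-invariant exists, together with the $q=2$ clause. Your proposal instead sketches a proof of Theorem~\ref{mainthm} (the hourglass-chain formula), and it does so by \emph{invoking} Theorem~\ref{thmqdr} as an ingredient (``The plan is to apply Theorem~\ref{thmqdr} after carrying out enough quadratic denominator reductions\dots''). As an argument for Theorem~\ref{thmqdr} this is circular; as an argument for Theorem~\ref{mainthm} it is answering a different question. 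Note also that the paper does not prove Theorem~\ref{thmqdr} itself: it is imported from \cite[Theorem~36 and Remark~37]{Sc2}.

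What a proof of the actual statement requires is entirely different machinery from hourglass combinatorics. One must show that each of the two quadratic reduction steps, \eqref{case1} and \eqref{case2}, changes the Legendre sum only by a sign modulo $q$, i.e., that
$\big(\big(A\alpha^2+B\alpha+C\big)^2\big)_q\equiv-\big(B^2-4AC\big)_q$ and
$\big(\big(D\alpha^2+E\alpha+F\big)(H\alpha+J)^2\big)_q\equiv-\big(DJ^2-EHJ+FH^2\big)_q$ modulo $q$,
under the degree constraint (total degree at most twice the number of variables); this in turn rests on a Chevalley--Warning--Ax-type theorem for double covers of affine space (proved by F.~Knop in the appendix of~\cite{Sc2}), and on the base case relating $[\Psi_G]_q/q^2$ to $-\big[{}^3\Psi_G\big]_q\equiv\big({}^3\Psi^2_G\big)_q$ via the Dodgson point-count formula. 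None of this appears in your proposal. If your intent was to prove Theorem~\ref{mainthm}, your outline is broadly consonant with the paper's Section~\ref{sec reductions} (initial linear reductions, passage to spanning forest polynomials, an inductive per-hourglass step, and an endgame identifying Dodgsons of $K'$), though it omits the scaling argument of Section~\ref{sec endgame} that is needed because quadratic denominator reduction genuinely stops at \eqref{finaleq}; but that is not the statement you were asked to prove.
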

If the $n$-invariant $^n\Psi_G$ exists, we get $^n\Psi^2_G=[^n\Psi_G]^2$, generalizing (\ref{n3}). In many cases quadratic denominator reduction goes significantly beyond
standard denominator reduction.

\subsection{Scaling}

Even when quadratic denominator reduction stops it is often possible to simplify further by scaling some variables (see Section~\ref{sec endgame}). In the context of
Legendre sums this technique is based on the elimination of square factors from the Legendre symbol. For classical point-counts scaling was already used in~\cite{SFq} and later in~\cite{K3}, with some observations on some combinatorial conditions which allow it in~\cite{Ysome}.
In this article we have a case where the result after scaling can be further reduced by additional steps of quadratic denominator reduction. This makes the scaling technique particularly powerful.

\subsection[phi4 theory]
{$\boldsymbol{\phi^4}$ theory}\label{sec phi4}

The most interesting graphs for us are the primitive 4-point $\phi^4$ graphs. Rephrased in a purely graph theoretic language, this means we are most interested in graphs which can be obtained by taking a 4-regular graph and removing one vertex. The 4-regular graph needs to be internally 6-edge connected, that is the only 4-edge cuts of the 4-regular graph are those which separate one vertex from the rest of the graph. For graphs obtained from an internally 6-edge connected 4-regular graph in this way the Feynman period is convergent~\cite{Scensus}. For such graphs we have
(quadratic) denominator reduction, Theorems \ref{thmdr} and~\ref{thmqdr} as well as the Chevalley--Warning theorem as extra tool (see~\cite[Lemma 2.6]{Ycirc} for an exposition).

Furthermore, the Feynman period of a graph obtained by removing a vertex of an internally 6-edge connected 4-regular graph does not depend on the choice of vertex removed.
This is the \emph{completion invariance} of the Feynman period. The analogous invariance for the $c_2$ invariant is conjectural~\cite{K3}, but an approach based on counting edge partitions has enabled
a proof when $q=2$ and the $4$-regular graph has an odd number of vertices~\cite{Yscompl}. Upcoming work of one of us with Simone Hu will complete the $q=2$ proof. In our hourglass chain graphs,
we will be removing the most convenient vertex; if we assume the conjecture then this is equivalent to removing any other vertex.

With the completion conjecture we can also ignore 4-regular graphs with three vertex splits ({\em reducible} graphs in~\cite{Scensus}, see Figure \ref{fig product}):
By deleting one of the three split vertices the decompleted graph inherits a two-vertex split which renders the $c_2$ trivial~\cite{BSYc2}. The Feynman period of a~graph with a 3-vertex split factorizes~\cite{Scensus}. Another reduction is obtained by ignoring graphs with double triangles (a pair of triangles with a common edge, see Figure \ref{fig double triangle reduction}).
It was shown in~\cite{BSYc2} that double triangles can be reduced
to single triangles (one of the common vertices becomes a crossing) without changing the $c_2$ invariant. With all reductions (internally 6-connected, 3-vertex connected, double-triangle-free)
we are lead to considering prime ancestors~\cite{gfe, Scensus}. Note that for suitable kernels $K$ the hourglass chains in this paper provide infinite families of prime ancestors.

\subsection{Hourglasses}\label{sec hourglasses}

By an \emph{hourglass} we mean two triangles sharing one common vertex. Taking two of the degree two vertices of an hourglass which are not in the same triangle and joining to two such vertices in another hourglass, we obtain a \emph{bihourglass}, see Figure~\ref{fig hourglass}.
Continuing by joining a third hourglass in the same way to the remaining degree 2 vertices of the second hourglass, and so on, we obtain longer \emph{hourglass chains}, where the hourglass and bihourglass are the hourglass chains of length~1 and~2 respectively.\vspace{2ex}
\begin{figure}[h!]
\centering
 \includegraphics[scale=1.1]{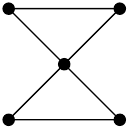} \quad \includegraphics[scale=1.05]{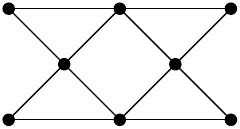}
 \caption{An hourglass and a bihourglass.}\label{fig hourglass}
\end{figure}

Hourglass chains of any length have four degree 2 vertices and all remaining vertices of degree~4. Take another fixed graph $K$ which has four degree 2 vertices and all remaining vertices of degree 4. Additionally, fix a bipartition of the degree 2 vertices of $K$ into two parts of size $2$. We will call $K$ (with this choice of bipartition) the \emph{kernel}, see Figure \ref{Fig:cases}.
Let $\mathcal{G}_K$ be the family of graphs obtained by taking an hourglass chain of any length, joining the two degree~2 vertices at one end of the chain to the two degree 2 vertices in one part
of the bipartition given with $K$, and joining the 2 vertices at the other end of the chain to the two degree~2 vertices in the other part of the bipartition. See Figure~\ref{fig graph type} for
an illustration. Note there are two ways to join on any hourglass chain compatible with the bipartition, differing by a half (M\"obius) twist.
In the end, this half twist will not affect the $c_2$ invariant, and so we include both in~$\mathcal{G}_K$.\looseness=1

\section{Hourglass reductions}\label{sec reductions}

The goal of this section is to prove Theorem~\ref{mainthm}. Note that after using the theorem one can continue to reduce any variable of the kernel $K$.
If $K$ is the kernel of a 4-regular hourglass chain then $K'$ has four vertices of degree three at the ends of the extra edges 1 and 2.
The topology of degree three vertices simplifies the structure of related Dodgsons~\cite{Brbig,K3}, \cite[Lemma~19]{Sc2}.
Particularly simple is the case of edge 2 whose variable is absent in (\ref{maineq}) (note that the choice of labels 1 and 2 is arbitrary).
Experiments suggest that it might always be possible to reduce both edges $\neq2$ of any vertex adjacent to edge 2.

\begin{question}\label{quest}
Let $v$ be a degree three vertex attached to edge 2 in $K'$. Let $\alpha_3$ and $\alpha_4$ be the edges of $v$ which are in $K$. Is it always possible to quadratically denominator
reduce the right hand side of (\ref{maineq}) with respect to $\alpha_3$ and $\alpha_4$? If yes, what expression does one get after the quadratic reduction of $\alpha_3$ and $\alpha_4$?
\end{question}

In general it is quite helpful for analyzing larger kernels to have closed expressions for the $c_2$ with as many reductions as possible.

Figure~\ref{fig overview} gives an overview of how the reductions will proceed.
Following the specified order in an explicit example using a computer for the reductions can also be a helpful way to follow through the general argument.

\begin{figure}
\centering
 \includegraphics[scale=.95]{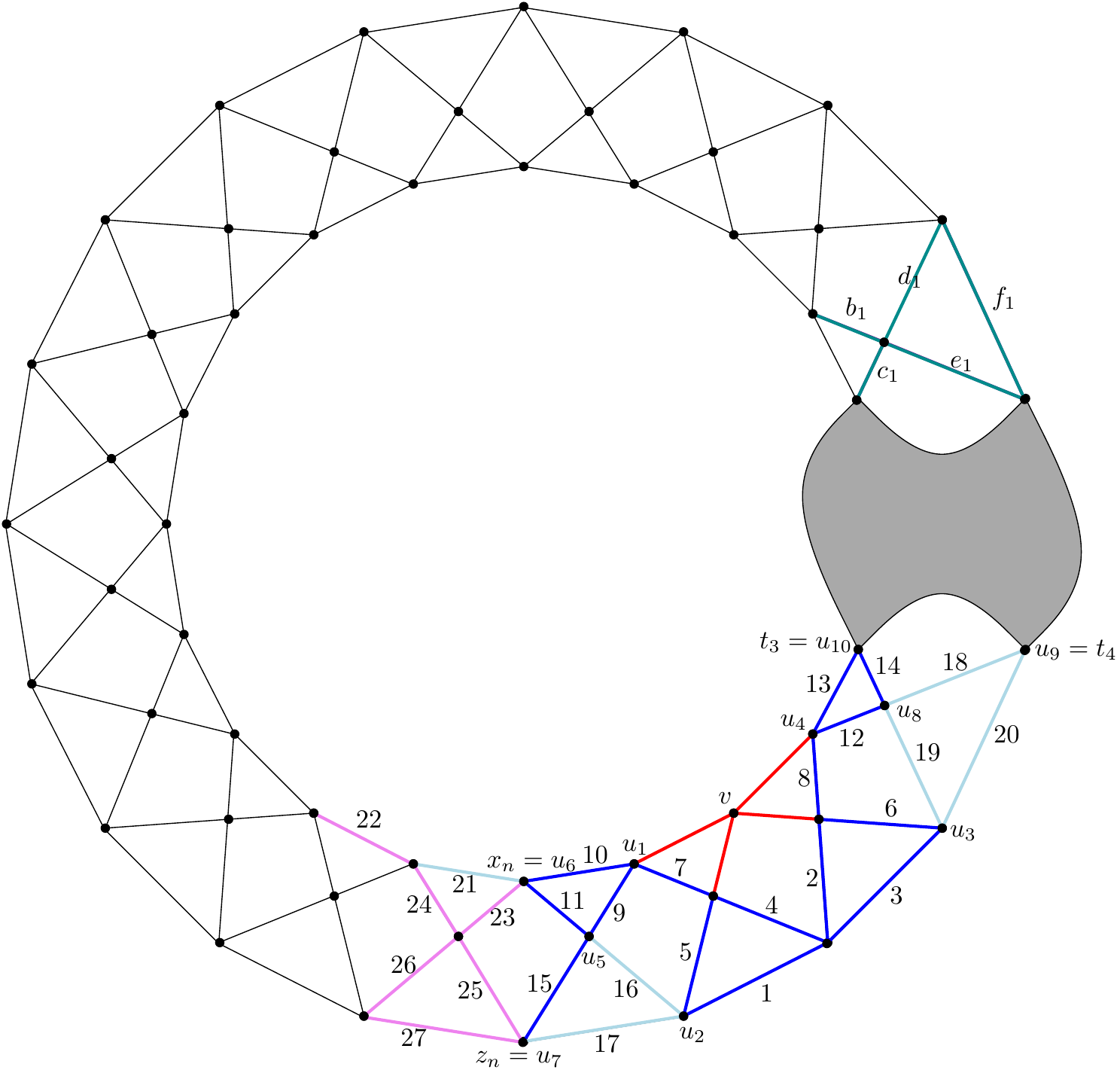}
 \caption{The order in which the edges should be reduced. First decomplete at $v$, thus removing the red edges. Then reduce the dark blue edges with conventional denominator reduction. Next reduce the light blue edges with quadratic denominator reduction. Continue with quadratic denominator reduction to reduce the six pink edges. The same form of denominator reappears, and so inductively we can reduce the analogous six edges in each subsequent hourglass until we reach the last hourglass. Finally reduce the remains of the last hourglass, the five dark cyan edges, according to Section~\ref{sec endgame}.}\label{fig overview}
\end{figure}

\subsection{Initial reductions}

The first step of the proof is to begin a conventional denominator reduction on $L-v$, see Section~\ref{sec dr}.

Consider the two hourglasses around $v$ and label the edges as in Figure~\ref{fig decompletion setup}. Then beginning our denominator reduction at the 4-invariant with (see (\ref{4inv}))\vspace{-.5ex}
\begin{gather*}
\Psi_{L-v}^{14,23}\Psi_{L-v}^{13,24},
\end{gather*}
the reductions of $5$ and $6$ are forced to avoid contracting the triangles $145$ and $236$ in the first factor, and then the reductions of $7$ and $8$ are forced
to avoid disconnecting the degree three vertices $457$ and $268$ in the first factor, see (\ref{vanishingcases}). This yields\vspace{-.5ex}
\begin{gather*}
\Psi^{1456, 2356}_{L-v, 78}\Psi^{1378,2478}_{L-v, 56}.
\end{gather*}

\begin{figure}
\centering
 \includegraphics{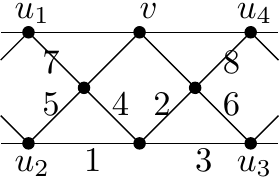}
 \caption{Edge labelling around $v$.}\label{fig decompletion setup}
\end{figure}

At this point it is more convenient to consider the situation in terms of spanning forest polynomials via Proposition~\ref{prop d to s} and Lemma~\ref{lem easy sign},\vspace{-.5ex}
\begin{gather*}
\Psi^{1456, 2356}_{L-v, 78}\Psi^{1378,2478}_{L-v, 56}
= \pm\Big(\Phi_{L'}^{\{u_1, u_4\}, \{u_2, u_3\}} - \Phi_{L'}^{\{u_1, u_3\}, \{u_2, u_4\}}\Big)\Phi_{L'}^{\{u_2\}, \{u_3\}},
\end{gather*}
where $L'$ is $L$ without $v$, the edges $1$ through $8$, and without the three vertices isolated by those removals. Label the triangle of $L'$ containing $u_1$ as in Figure~\ref{fig label 9 10 11}. Reduce edge $9$ by the general deletion and contraction reduction formula (\ref{cd}).
Notice that in both terms where $9$ was deleted, $10$ is an isthmus and $u_1$, the vertex at the isolated end of $10$, is not in a part by itself.
Thus $10$ cannot be cut in these terms, forcing $10$ to be contracted in the other factors. This gives\vspace{-.5ex}
\begin{gather*}
\pm\Big(\Phi_{L'-u_1}^{\{u_5, u_4\}, \{u_2, u_3\}} - \Phi_{L'-u_1}^{\{u_5, u_3\}, \{u_2, u_4\}} - \Phi_{L'-u_1}^{\{u_6, u_4\},\{u_2, u_3\}} + \Phi_{L'-u_1}^{\{u_6, u_3\}, \{u_2, u_4\}}\Big) \Phi_{L'-u_1}^{\{u_2\}, \{u_3\}},
\end{gather*}
where the vertices are as labelled in Figure~\ref{fig label 9 10 11}.

\begin{figure}
\centering
 \includegraphics{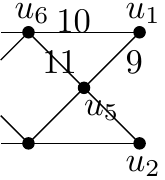}
 \caption{Edge labelling for the triangle containing $u_1$.}\label{fig label 9 10 11}
\end{figure}

Consider which trees the vertices $u_6$ and $u_7$ can belong to in the first factor of the previous expression. This factor is illustrated in Figure~\ref{fig reduce 10}. Both trees of the forest appear in the portion of the graph including vertices $u_2$, $u_5$, $u_6$ and $u_7$ and so must exit from this portion to the rest of the graph via the only possible vertices, $u_6$ and $u_7$. Since there are two trees and two vertices, one must use $u_6$ and the other must use $u_7$. In view of the shape of the graph, the only way this can happen is that the tree corresponding to the square vertices exits via~$u_7$ and the tree corresponding to the circle vertices exits via~$u_6$.
Then the only thing undetermined in how the trees go through the illustrated part of the graph, is the tree to which $u_5$ belongs in the third and fourth terms. Summing over both possibilities we see that one of the possibilities cancels with the first two terms, and so what remains of the entire expression is
\begin{gather*}
\pm\Big(\Phi_{L'-u_1}^{\{u_2, u_5, u_7, u_3\}\{u_6, u_4\}}-\Phi_{L'-u_1}^{\{u_2,u_5,u_7,u_4\}\{u_6,u_3\}}\Big)\Phi_{L'-u_1}^{\{u_2\},\{u_3\}}.
\end{gather*}

\begin{figure}\centering
 \includegraphics{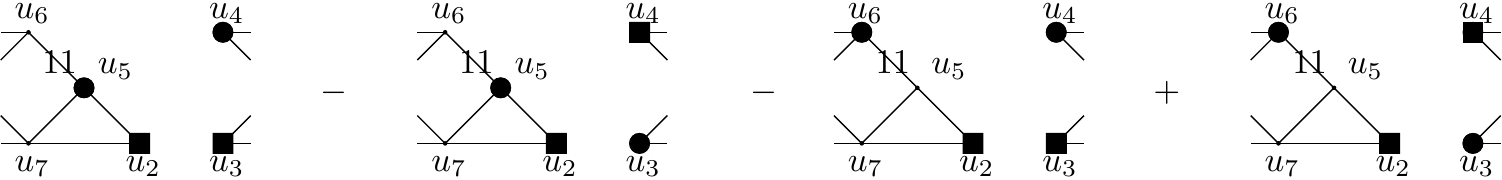}
 \caption{The first factor in the expression after reducing edge 10.}\label{fig reduce 10}
\end{figure}

In the first factor, edge 11 must always be deleted since its two ends are in different parts and so we obtain
\begin{gather*}
\pm\Big(\Phi_{(L'-u_1)\backslash 11}^{\{u_2, u_5, u_7, u_3\}\{u_6, u_4\}}-\Phi_{(L'-u_1)\backslash 11}^{\{u_2,u_5,u_7,u_4\}\{u_6,u_3\}}\Big)\Phi_{(L'-u_1)/11}^{\{u_2\},\{u_3\}}.
\end{gather*}
The only way that the hourglass with $u_3$ and $u_4$ affected this computation was to guarantee that both trees had to leave the part of the graph illustrated on the left. In other words, all we needed to know is that both trees appeared in the part of the graph illustrated on the right. Swapping left and right this remains true and so we can use the same argument as above on the triangle involving $u_4$. Labelling the vertices of the hourglass including $u_3$ and $u_4$ as in Figure~\ref{fig next hourglass}, this calculation gives
\begin{gather*}
\pm\Big(\Phi_{L_1}^{\{u_2, u_5, u_7, u_3, u_8, u_9\},\{u_6, u_{10}\}} - \Phi_{L_1}^{\{u_2, u_5, u_7, u_{10}\},\{u_3,u_8, u_9, u_6\}}\Big)\Phi_{L_2}^{\{u_2\},\{u_3\}},
\end{gather*}
where $L_1 = (L'-\{u_1,u_4\})\backslash \{11,14\}$ and $L_2 = (L'-\{u_1,u_4\})/ \{11,14\}$. Now comes the key observation that the first factor can be factored. First, the triangles $u_2$, $u_5$, $u_7$ and $u_3$, $u_8$, $u_9$ factor off since they are only joined at a vertex and the tree to which that vertex belongs is known. Additionally, similarly to the observations used above, both trees need to propagate through each hourglass remaining in the chain since both trees appear on both sides. However, the trees cannot cross within an hourglass, so one tree must run down one side of the hourglasses and the other tree down the other side; only the middle vertex could be in either tree.\looseness=1

\begin{figure}
\centering
 \includegraphics{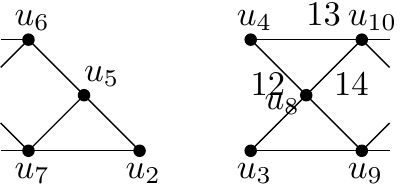}
 \caption{The vertex labellings for the hourglass containing $u_3$ and $u_4$.}\label{fig next hourglass}
\end{figure}

To write this down nicely we need some more systematic notation. The edges and vertices of each hourglass will be labelled as in Figure~\ref{fig hourglass labels}, where the hourglasses $X_i$ are indexed by $i$. When it is useful to talk about an hourglass generically we will leave out the subscripts. Additionally, let $t_1$, $t_2$, $t_3$ and $t_4$ be the degree 2 vertices of $K$ with the bipartition being $\{t_1, t_2\}$, $\{t_3, t_4\}$. With this notation, the factorization observation allows us to rewrite the denominator expression so far as
\begin{gather}
\pm\Psi_{L[u_2, u_5, u_7]}\Psi_{L[u_3,u_8,u_9]}\bigg(\prod_{i}\Phi_{X_i}^{\{w_i,x_i\}, \{y_i,z_i\}} \bigg)\Big(\Phi_{K}^{\{t_1, t_3\}, \{t_2, t_4\}} - \Phi_{K}^{\{t_1, t_4\}, \{t_2, t_3\}}\Big)\nonumber
\\ \qquad
{}\times\Phi_{L_2}^{\{u_2\}, \{u_3\}},
\label{eq the expression}
\end{gather}
where $L[S]$ for a set of vertices $S$ indicates the induced subgraph of $L$ given by the vertices of $S$, that is the subgraph with the vertices of $S$ and all edges in $L$ that have both ends in $S$. In this case the two induced subgraphs are both triangles.
Diagrammatically, this expression can be represented as in Figure~\ref{fig the expression}. The reader is encouraged to draw all the steps diagrammatically, as this gives the most insight into the calculation.

\begin{figure}
\centering
 \includegraphics{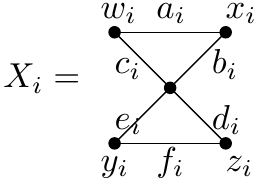}
 \caption{Labels for hourglasses.}\label{fig hourglass labels}
\end{figure}

\begin{figure}
\centering
 \includegraphics{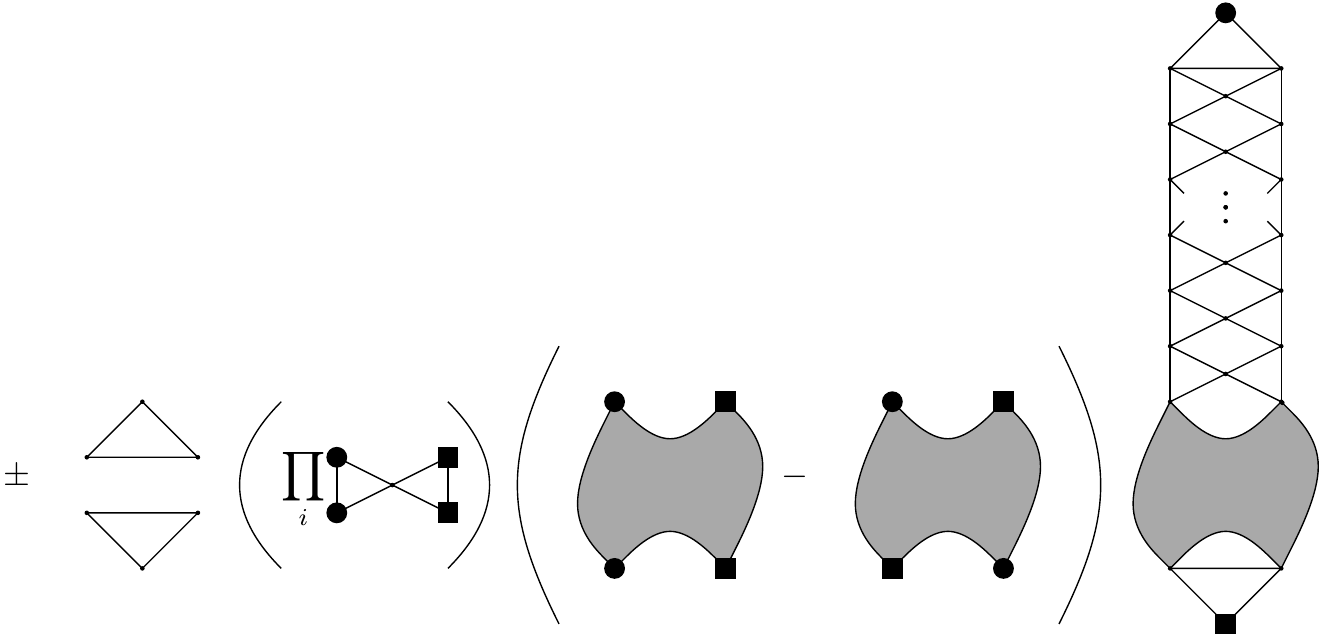}
 \caption{Diagrammatic representation of \eqref{eq the expression}.}\label{fig the expression}
\end{figure}

Note that due to the choice of $v$, one of the triangles $u_2$, $u_5$, $u_7$ and $u_3$, $u_8$, $u_9$ shares two vertices with $K$, while the other does not. Without loss of generality say that $u_2$, $u_5$, $u_7$ is the one that is adjacent to other hourglasses, if there are any. This is the one drawn pointing upwards in Figure~\ref{fig the expression}, and $u_2$ is the top vertex.

\subsection{The two triangles}

The plan of attack now is to reduce the edges in the two triangles $u_2$, $u_5$, $u_7$ and $u_3$, $u_8$, $u_9$. After the first such edge, we will need to pass to quadratic denominator reduction.
Note that the only factors in \eqref{eq the expression} containing edge variables from the triangle $u_2$, $u_5$, $u_7$ are $\Psi_{L[u_2, u_5, u_7]}$ and $\Phi_{L_2}^{\{u_2\}, \{u_3\}}$. Let the edge between $u_5$ and $u_7$ be $15$ and let $16$ and $17$ be the other two edges of the triangle $u_2$, $u_5$, $u_7$. Reduce 15 in the usual way to obtain
\begin{gather*}
(\alpha_{16}+\alpha_{17})\Phi_{L_2\backslash 15}^{\{u_2\}, \{u_3\}} - \Phi_{L_2/15}^{\{u_2\}, \{u_3\}}
\end{gather*}
times the factors not involving the triangle $u_2$, $u_5$, $u_7$. Expanding out $\alpha_{16}$ and $\alpha_{17}$ we get
\begin{gather*}
(\alpha_{16}+\alpha_{17})\Phi_{L_2\backslash 15}^{\{u_2\}, \{u_3\}} - \Phi_{L_2/15}^{\{u_2\}, \{u_3\}} \\ \qquad
{}=(\alpha_{16}+\alpha_{17})\Big(\alpha_{16}\Phi_{L_3}^{\{z_n\}, \{u_3\}} + \alpha_{17}\Phi_{L_3}^{\{x_n\}, \{u_3\}} + \alpha_{16}\alpha_{17}\Psi_{L_3}\Big) - \alpha_{16}\alpha_{17}\Psi_{L_4}
\\ \qquad
{}= \alpha_{16}^2\Phi_{L_3}^{\{z_n\}, \{u_3\}} + \alpha_{17}^2\Phi_{L_3}^{\{x_n\}, \{u_3\}} + 2\alpha_{16}\alpha_{17}\Phi_{L_3}^{\{x_n, z_n\}, \{u_3\}} + (\alpha_{16}+\alpha_{17})\alpha_{16}\alpha_{17}\Psi_{L_3}
\end{gather*}
times the factors not involving the triangle $u_3$, $u_5$, $u_7$, where
\begin{itemize}\itemsep=0pt
\item $n$ is the index of the hourglass adjacent to the triangle $u_2$, $u_5$, $u_7$ with the vertices $x_n$ and~$z_n$ being the same vertices as $u_5$ (which has also been contracted with the original $u_6$) and~$u_7$ respectively,
\item $16$ the edge from $x_n=u_5$ to $u_2$ and $17$ the edge from $z_n=u_7$ to $u_2$,
\item $L_3= (L_2-u_2)\backslash 15$ and $L_4 =(L_2-u_2)/ 15$.
\end{itemize}
This is illustrated diagrammatically in Figure~\ref{fig reduce 15}. Note that in these calculations we used that merging two vertices is equivalent to not having them merged but having them be in different parts of the vertex partition defining the spanning forest polynomial, and when there was originally only a single tree, there is no need to consider how the parts interact with any existing parts. We also used that if a vertex is not in a part of a partition for a spanning forest polynomial then we can sum over all possibilities for putting that vertex into a part.

\begin{figure}
\centering
 \includegraphics[width=\linewidth]{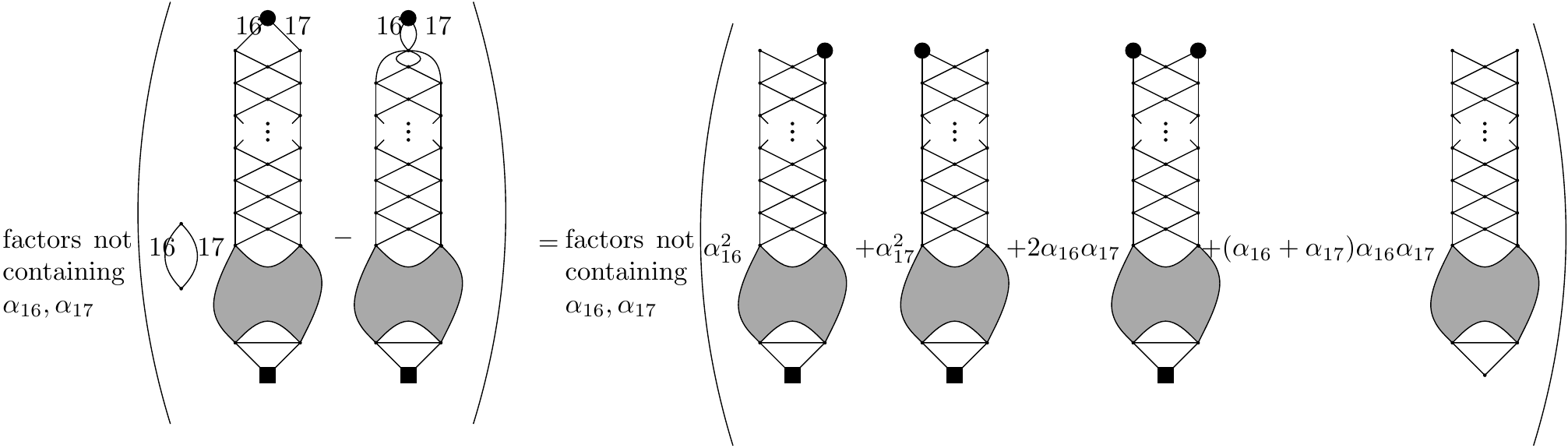}
 \caption{Diagrammatic rendition of the result of reducing edge 15.}\label{fig reduce 15}
\end{figure}

Now we need to move to quadratic denominator reduction, see Section~\ref{sec qdr}. Reducing $\alpha_{16}$ according to quadratic denominator reduction (\ref{case1}) we obtain
\begin{gather*}
\big(2\alpha_{17}\Phi_{L_3}^{\{x_n,z_n\}, \{u_3\}} + \alpha_{17}^2\Psi_{L_3} \Big)^2 - 4\Big(\Phi_{L_3}^{\{z_n\}, \{u_3\}} + \alpha_{17}\Psi_{L_3}\Big)\alpha_{17}^2\Phi_{L_3}^{\{x_n\}, \{u_3\}}
\end{gather*}
times the square of the factors not involving the triangle $u_3$, $u_5$, $u_7$. Note that $\alpha_{17}^2$ factors out of this expression and so following (\ref{case2})
quadratic denominator reduction of $\alpha_{17}$ gives
\begin{gather*}
4\Big( \big(\Phi_{L_3}^{\{x_n, z_n\}, \{u_3\}}\big)^2 - \Phi_{L_3}^{\{z_n\}, \{u_3\}}\Phi_{L_3}^{\{x_n\}, \{u_3\}}\Big)
\end{gather*}
times the square of the factors not involving the triangle $u_3$, $u_5$, $u_7$.

Next consider the $u_3$, $u_8$, $u_9$ triangle. Unfortunately we cannot simply use the same argument as above since we are not starting in conventional denominator reduction this time. This part of the argument is rather gruesome, but fortunately, it is the last bit of messy work before we get to the systematic part.

Let $18$ be the edge between $u_8$ and $u_9$, and let $19$ and $20$ be the edges from $u_3$ to $t_3$ and $t_4$ respectively. For the purposes of these edges, the factors we need to consider are the one given explicitly above and the factor for the $u_3$, $t_3$, $t_4$ triangle itself. Namely we have
\begin{gather*}
(\alpha_{18}+\alpha_{19}+\alpha_{20})^2\Big(\big(\Phi_{L_3}^{\{x_n, z_n\}, \{u_3\}}\big)^2 - \Phi_{L_3}^{\{z_n\}, \{u_3\}}\Phi_{L_3}^{\{x_n\}, \{u_3\}}\Big)
\end{gather*}
times the square of the factors not involving the triangles $u_3$, $u_5$, $u_7$ and $u_3$, $t_3$, $t_4$. We have included the constant $4$ among the suppressed factors as it also is simply carried along through the calculations that follow.

Reducing edge $18$ we obtain
\begin{gather*}
(\alpha_{19}+\alpha_{20})^2\Big(\big(\Phi_{L_3\backslash 18}^{\{x_n, z_n\}, \{u_3\}}\big)^2 - \Phi_{L_3\backslash 18}^{\{z_n\}, \{u_3\}}\Phi_{L_3\backslash 18}^{\{x_n\}, \{u_3\}}\Big)
\\ \qquad
{}- (\alpha_{19}\!+\!\alpha_{20})\Big(2\Phi_{L_3\backslash 18}^{\{x_n, z_n\}, \{u_3\}}\Phi_{L_3/18}^{\{x_n, z_n\}, \{u_3\}} \!- \!\Phi_{L_3\backslash 18}^{\{z_n\}, \{u_3\}}\Phi_{L_3/18}^{\{x_n\}, \{u_3\}} \!-\! \Phi_{L_3\backslash 18}^{\{x_n\}, \{u_3\}}\Phi_{L_3/18}^{\{z_n\}, \{u_3\}}\Big)
\\ \qquad
{}+ \Big(\big(\Phi_{L_3/18}^{\{x_n, z_n\}, \{u_3\}}\big)^2 - \Phi_{L_3/18}^{\{z_n\},\{u_3\}}\Phi_{L_3/18}^{\{x_n\}, \{u_3\}}\Big)
\end{gather*}
times the square of the factors not involving either triangle.
Expanding out all the $\alpha_{19}$ and $\alpha_{20}$ explicitly and freely using the observations on spanning forest polynomials used before, there are many nice cancellations and we obtain
\begin{gather*}
(\alpha_{19}+\alpha_{20})^2
\Big(\alpha_{19}^2\Big(\big(\Phi_{M_n}^{\{\{x_n, z_n\}, \{t_4\}}\big)^2 - \Phi_{M_n}^{\{z_n\}, \{t_4\}}\Phi_{M_n}^{\{x_n\}, \{t_4\}}\Big)
\\ \qquad
{} + \alpha_{20}^2\Big(\big(\Phi_{M_n}^{\{x_n, z_n\}, \{t_3\}}\big)^2 - \Phi_{M_n}^{\{z_n\}, \{t_3\}}\Phi_{M_n}^{\{x_n\}, \{t_3\}}\Big)
\\ \qquad
{}+ \alpha_{19}\alpha_{20}\Big(2\Phi_{M_n}^{\{x_n, z_n\}, \{t_3\}}\Phi_{M_n}^{\{x_n, z_n\}, \{t_4\}} - \Phi_{M_n}^{\{z_n\}, \{t_3\}}\Phi_{M_n}^{\{x_n\}, \{t_4\}}
 - \Phi_{M_n}^{\{z_n\}, \{t_4\}}\Phi_{M_n}^{\{x_n\}, \{t_3\}}
 \\ \qquad\hphantom{+ \alpha_{19}\alpha_{20}\Big(}
 + \Phi_{M_n}^{\{x_n\}, \{z_n\}}\Phi_{M_n}^{\{t_3\}, \{t_4\}}\Big)- \alpha_{19}\alpha_{20}(\alpha_{19}+\alpha_{20})\Psi_{M_n}\Phi^{\{x_n\}, \{z_n\}}\Big)
\end{gather*}
times the square of the factors not involving either triangle and where $M_n = (L_3-u_3)\backslash 18$ (the $n$ refers to the index of the last remaining hourglass). This calculation uses the fact that $\Phi_{M_n}^{\{z_n\},\{t_4\}} + \Phi_{M_n}^{\{x_n\}, \{t_4\}}-2\Phi_{M_n}^{\{x_n, z_n\}, \{t_4\}} = \Phi_{M_n}^{\{z_n\},\{t_3\}} + \Phi_{M_n}^{\{x_n\}, \{t_3\}}-2\Phi_{M_n}^{\{x_n, z_n\}, \{t_3\}} = \Phi_{M_n}^{\{x_n\}, \{z_n\}}$ which can be seen to be true by expanding all the spanning forest polynomials so that each of $x_n$, $z_n$, $t_3$, $t_4$ is in each partition.

Because of the factor of $(\alpha_{19}+\alpha_{20})^2$ we can proceed to reduce edge $19$ to obtain
\begin{gather*}
\alpha_{20}^2 \Big(\big(\Phi_{M_n}^{\{x_n, z_n\}, \{t_4\}}\big)^2 - \Phi_{M_n}^{\{z_n\}, \{t_4\}}\Phi_{M_n}^{\{x_n\}, \{t_4\}} - \alpha_{20}\Psi_{M_n}\Phi_{M_n}^{\{x_n\}, \{z_n\}}\Big)
\\ \qquad
{}- \alpha_{20} \Big(\alpha_{20}\big(2\Phi_{M_n}^{\{x_n, z_n\}, \{t_3\}}\Phi_{M_n}^{\{x_n z_n\}, \{t_4\}} - \Phi_{M_n}^{\{z_n\}, \{t_3\}}\Phi_{M_n}^{\{x_n\}, \{t_4\}}
 - \Phi_{M_n}^{\{z_n\},\{t_4\}}\Phi_{M_n}^{\{x_n\}, \{t_3\}}
 \\ \qquad\hphantom{- \alpha_{20} \Big(\alpha_{20}\big(}
{} + \Phi_{M_n}^{\{x_n\},\{z_n\}}\Phi_{M_n}^{\{t_3\}, \{t_4\}}\big)- \alpha_{20}^2\Psi_{M_n}\Phi^{\{x_n\}, \{z_n\}}\Big)
\\ \qquad
{}+ \alpha_{20}^2\Big(\big(\Phi_{M_n}^{\{x_n, z_n\}, \{t_3\}}\big)^2 - \Phi_{M_n}^{\{z_n\}, \{t_3\}}\Phi_{M_n}^{\{x_n\}, \{t_3\}}\Big)
\end{gather*}
times the square of the factors not involving either triangle. There is a factor of $\alpha_{20}^2$
in this expression, so we can reduce $\alpha_{20}$ to obtain
\begin{gather*}
\big(\Phi_{M_n}^{\{x_n, z_n\},\{t_4\}}\big)^2 - \Phi_{M_n}^{\{z_n\}, \{t_4\}}\Phi_{M_n}^{\{x_n\}, \{t_4\}} - 2 \Phi_{M_n}^{\{x_n, z_n\}, \{t_3\}}\Phi_{M_n}^{\{x_n, z_n\}, \{t_4\}} + \Phi_{M_n}^{\{z_n\}, \{t_3\}}\Phi_{M_n}^{\{x_n\}, \{t_4\}}
\\ \qquad
{}+ \Phi_{M_n}^{\{z_n\}, \{t_4\}}\Phi_{M_n}^{\{x_n\}, \{t_3\}} - \Phi_{M_n}^{\{x_n\}, \{z_n\}}\Phi_{M_n}^{\{t_3\}, \{t_4\}} + \big(\Phi_{M_n}^{\{x_n, z_n\}, \{t_3\}}\big)^2 - \Phi_{M_n}^{\{z_n\}, \{t_3\}}\Phi_{M_n}^{\{x_n\}, \{t_3\}}
\end{gather*}
times the square of the factors not involving either triangle. Expanding over all possibilities for assigning whichever of $x_n$, $z_n$, $t_3$, $t_4$ are not in the partition in each term, cancelling and then recollecting terms we can simplify the expression above to
\begin{gather}\label{eq done triangle}
 \Big(\Phi_{M_n}^{\{x_n, t_4\}, \{z_n, t_3\}} - \Phi_{M_n}^{\{x_n, t_3\}, \{z_n, t_4\}}\Big)^2 - \Phi_{M_n}^{\{x_n\}, \{z_n\}}\Phi_{M_n}^{\{t_3\}, \{t_4\}}
\end{gather}
times the square of the factors not involving either triangle. This expression is more symmetric than the notation makes it seem. Figure~\ref{fig done triangles} shows the symmetry better. Note that other than to choose the labelling of the vertices, we have not used the fact that all the remaining hourglasses are on one side of $K$. If we followed the same calculation beginning with $v$ in the middle of the hourglasses, then at this point we would have hourglasses on each side of $K$ making the expression nicely symmetric.

\begin{figure}
\centering
 \includegraphics[scale=.95]{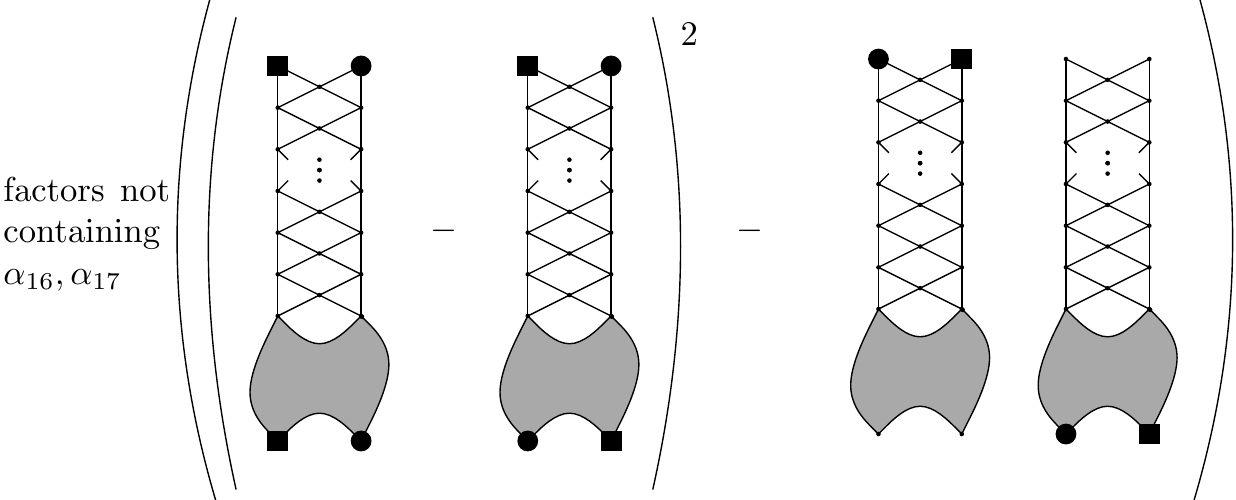}
 \caption{Diagrammatic representation of \eqref{eq done triangle}.}\label{fig done triangles}
\end{figure}

To proceed, we can apply a Dodgson identity to the last factor.
Let $M_n'$ be the graph obtained by adding an edge labelled $1$ joining $z_n$ and $x_n$ and an edge labelled $2$ joining $t_3$ and $t_4$ and let~$M_n'$.
Then \eqref{eq done triangle} can be written as $\big(\Psi_{M_n'}^{1,2}\big)^2 - \Psi^{1,1}_{2, M_n'}\Psi^{2,2}_{1, M_n'}$, so by the Dodgson identity (\ref{d12}) we find that
\eqref{eq done triangle} is also $-\Psi_{M_n', 12} \Psi_{M_n'}^{12,12}$.

Putting back in the factors we have been ignoring we get
\begin{gather}\label{eq after triangles}
 -4\bigg(\prod_{i}\Phi_{X_i}^{\{w,x\}, \{y,z\}} \bigg)^2\Big(\Phi_{K}^{\{t_1, t_3\}, \{t_2, t_4\}} - \Phi_{K}^{\{t_1, t_4\}, \{t_2, t_3\}}\Big)^2\Psi_{M_n', 12} \Psi_{M_n'}^{12,12}.
\end{gather}
For an illustration of this equation see Figure~\ref{fig after triangles}.

\begin{figure}
\centering
 \includegraphics[scale=.95]{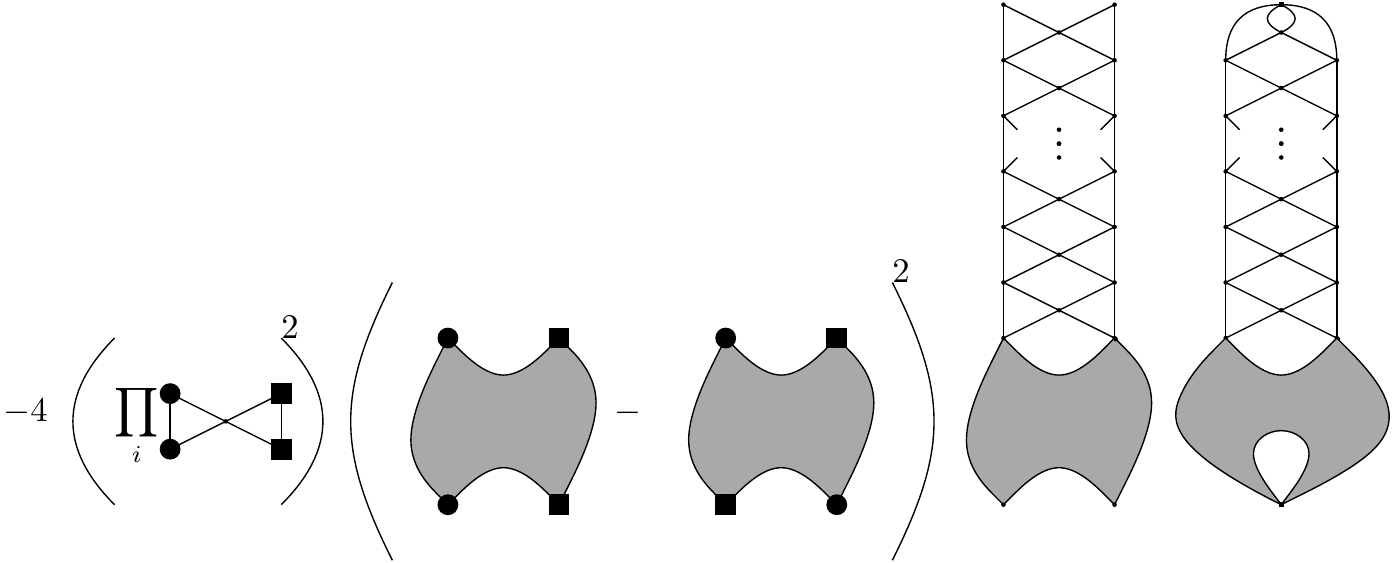}
 \caption{Diagrammatic representation of \eqref{eq after triangles}.}\label{fig after triangles}
\end{figure}

\subsection{Systematic hourglass reduction}

Notice that the two right hand factors of \eqref{eq after triangles} are the same except that $x_n$ and $z_n$, the top two vertices as illustrated, are identified or not
and likewise for $t_3$ and $t_4$, the bottom two vertices. This will be important, so it will be useful to have a compact notation for this; write
\begin{gather*}
\TA{n}\qquad \text{for} \quad \Psi_{M_n', 2}^{1,1},
\end{gather*}
that is the top vertices together ($T$) and the bottom two vertices apart ($A$), and similarly for
\begin{gather*}
\TT{n} = \Psi_{M_n', 12}, \qquad \AA{n} = \Psi_{M_n'}^{12,12}, \qquad \AT{n}= \Psi_{M_n', 2}^{1,1}.
\end{gather*}

The next order of business is to start reducing the top hourglass $X_n$. Recall the hourglass notation as in Figure~\ref{fig hourglass labels}.

First we reduce $a_n$. We are going to need to keep track to all the polynomials that come about from the remains of $X_n$ as they appear
in each term after reducing $a_n$. Writing generically for every hourglass for the moment, define
\begin{gather*}
 A = de,
 \qquad
 B = de(b+c)+bc(d+e+f),
 \\
 C = d+e+f,
 \qquad
 D = bd+(b+d)(e+f),
 \\
 E = ce+(c+e)(d+f),
 \qquad
 F = bd(c+e) + ce(b+d) + f(c+e)(b+d),
 \\
 G = (b+c)(d+e+f),
 \qquad
 H = bcd + (bc+bd+cd)(e+f),
 \\
 I = bce + (bc+be+ce)(d+f),
 \qquad
 J = f(bcd+bce+bde+cde).
\end{gather*}
These are the Kirchhoff polynomials and spanning forest polynomials corresponding to the graphs shown in Figure~\ref{fig catalogue}.

\begin{figure}
\centering
 \includegraphics[scale=.98]{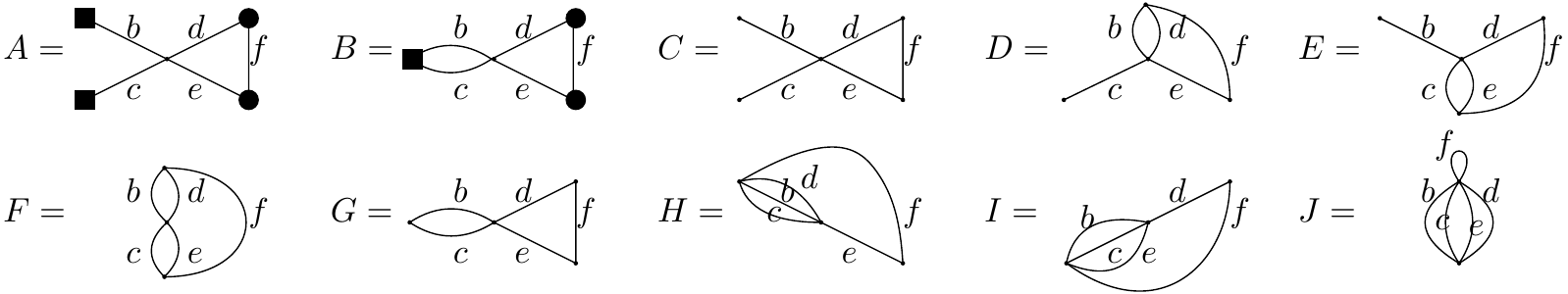}
 \caption{The hourglass pieces that occur.}\label{fig catalogue}
\end{figure}

Using \eqref{eq 2 cut} along with the notation above, reducing $a_n$ gives
\begin{gather*}
 -4\bigg(\prod_{i<n}\Phi_{X_i}^{\{w,x\}, \{y,z\}}\bigg)^2
 \Big(\Phi_{K}^{\{t_1, t_3\}, \{t_2, t_4\}} - \Phi_{K}^{\{t_1, t_4\}, \{t_2, t_3\}}\Big)^2
 \\ \qquad
{}\times \bigg(B_n^2 \bigg( C_nD_n \TA{n-1} \TT{n-1} + E_nD_n \AA{n-1} \TT{n-1} + C_nF_n \TA{n-1} \AT{n-1} + E_nF_n \AA{n-1} \AT{n-1}\bigg)
 \\ \qquad\hphantom{\times \bigg(}
{} - A_nB_n \bigg( G_nD_n \TA{n-1} \TT{n-1} + I_nD_n \AA{n-1} \TT{n-1}
 + G_nF_n \TA{n-1} \AT{n-1} + I_nF_n \AA{n-1} \AT{n-1}\bigg)
 \\ \qquad\hphantom{\times \bigg(}
 {} - A_nB_n \bigg( C_nH_n \TA{n-1} \TT{n-1} + E_nH_n \AA{n-1} \TT{n-1}
 + C_nJ_n \TA{n-1} \AT{n-1} + E_nJ_n \AA{n-1} \AT{n-1}\bigg)
 \\ \qquad\hphantom{\times \bigg(}
 + A_n^2 \le\bigg( G_nH_n \TA{n-1} \TT{n-1} + I_nH_n \AA{n-1} \TT{n-1}
 + G_nJ_n \TA{n-1} \AT{n-1} + I_nJ_n \AA{n-1} \AT{n-1}\bigg) \bigg).
\end{gather*}
This expression factors giving
\begin{gather*}
 -4\bigg(\prod_{i<n}\Phi_{X_i}^{\{w,x\}, \{y,z\}}\bigg)^2\Big(\Phi_{K}^{\{t_1, t_3\}, \{t_2, t_4\}} - \Phi_{K}^{\{t_1, t_4\}, \{t_2, t_3\}}\Big)^2
 \\ \qquad
{}\times \bigg(\TA{n-1}(B_nC_n-A_nG_n) + \AA{n-1}(B_nE_n - A_nI_n)\bigg)
 \\ \qquad
{}\times \bigg(\TT{n-1}(B_nD_n - A_nH_n) + \AT{n-1}(B_nF_n - A_nJ_n)\bigg).
\end{gather*}
Plugging in the expressions for the polynomials this simplifies further to
\begin{gather}
 -4\bigg(\prod_{i<n}\Phi_{X_i}^{\{w,x\}, \{y,z\}}\bigg)^2\Big(\Phi_{K}^{\{t_1, t_3\}, \{t_2, t_4\}} - \Phi_{K}^{\{t_1, t_4\}, \{t_2, t_3\}}\Big)^2
 (d_n+e_n+f_n)c_nZ_n\nonumber
\\ \qquad
 {}\times\bigg(\TA{n-1}b_n(d_n+e_n+f_n)+\AA{n-1}Y_n\bigg)\bigg(\TT{n-1}b_n(d_n+e_n+f_n) + \AT{n-1}Y_n\bigg),
\label{eq will return}
\end{gather}
where
\begin{gather*}
 Y = bcd+bce+bde+cde+bcf+bef,
 \\
 Z = bcd+bce+bde+cde+bcf+cdf.
\end{gather*}

Next we reduce $a_{n-1}$, yielding
\begin{gather*}
 -4\bigg(\prod_{i<n-1}\Phi_{X_i}^{\{w,x\}, \{y,z\}}\bigg)^2\Big(\Phi_{K}^{\{t_1, t_3\}, \{t_2, t_4\}} - \Phi_{K}^{\{t_1, t_4\}, \{t_2, t_3\}}\Big)^2 (d_n+e_n+f_n)c_nZ_n
 \\ \qquad
 {}\times \bigg(\TA{n-2}((d_n+e_n+f_n)b_n(B_{n-1}D_{n-1}-A_{n-1}H_{n-1}) + Y_n(B_{n-1}C_{n-1}-A_{n-1}G_{n-1}))
 \\ \qquad\hphantom{\times \bigg(}
 {} + \AA{n-2}((d_n+e_n+f_n)b_n(B_{n-1}F_{n-1}-A_{n-1}J_{n-1}) + Y_n(B_{n-1}E_{n-1}-A_{n-1}I_{n-1}))\bigg)
 \\ \qquad
 {}\times \bigg(\TT{n-2}((d_n+e_n+f_n)b_n(B_{n-1}D_{n-1}-A_{n-1}H_{n-1}) + Y_n(B_{n-1}C_{n-1}-A_{n-1}G_{n-1}))
 \\ \qquad \hphantom{\times \bigg(}
 {}+ \AT{n-2}((d_n+e_n+f_n)b_n(B_{n-1}F_{n-1}-A_{n-1}J_{n-1}) + Y_n(B_{n-1}E_{n-1}-A_{n-1}I_{n-1}))\bigg).
\end{gather*}
Substituting in the expressions for the polynomials something special happens; a square factor appears:
\begin{gather*}
 -4\bigg(\prod_{i<n-1}\Phi_{X_i}^{\{w,x\}, \{y,z\}}\bigg)^2\Big(\Phi_{K}^{\{t_1, t_3\}, \{t_2, t_4\}} - \Phi_{K}^{\{t_1, t_4\}, \{t_2, t_3\}}\Big)^2 (d_n+e_n+f_n)c_nZ_n
 \\ \qquad
 {}\times \big((d_n+e_n+f_n)b_nZ_{n-1} + Y_nc_{n-1}(d_{n-1}+e_{n-1}+f_{n-1})\big)^2
 \\ \qquad
 {}\times \bigg(\TA{n-2}(d_{n-1}\!+e_{n-1}\!+\!f_{n-1})b_{n-1}\! + \AA{n-2}Y_{n-1}\!\bigg)\bigg(\TT{n-2}(d_{n-1}\!+e_{n-1}\!+f_{n-1})b_{n-1}\!+ \AT{n-2}Y_{n-1}\!\bigg).
\end{gather*}

We have a square term $T_{n-1,n}^2$, where $T_{n-1,n} =(d_n+e_n+f_n)b_nZ_{n-1} + Y_nc_{n-1}(d_{n-1}+e_{n-1}\allowbreak+f_{n-1})$. The factor $T_{n-1,n}$ involves edge variables from both of the top two hourglasses. It is the spanning forest polynomial illustrated in Figure~\ref{fig bihourglass factor}. We use this factor to reduce the variables of the hourglass $X_n$ (in any order).
This part of the calculation is routine and can be done by hand or rigorously on a computer (reductions are, e.g., implemented in~\cite{Shlog}), obtaining
\begin{gather*}
(d_n+e_n+f_n)c_nZ_nT_{n-1,n}^2\rightarrow c_{n-1}(d_{n-1}+e_{n-1}+f_{n-1})Z_{n-1}
\end{gather*}
which exactly leads to \eqref{eq will return} with $n-1$ in place of $n$.

\begin{figure}
\centering
 \includegraphics{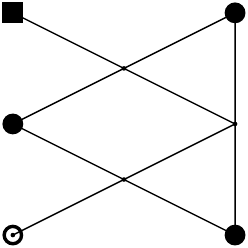}
 \caption{Spanning forest polynomial giving the factor $T_{n-1,n}$.}\label{fig bihourglass factor}
\end{figure}

Inductively, we can reduce until only pieces from one hourglass $X_1$ remain. At this step the quadratic denominator reduction will give
\begin{gather*}
 -4 \Big(\Phi_{K}^{\{t_1, t_3\}, \{t_2, t_4\}} - \Phi_{K}^{\{t_1, t_4\}, \{t_2, t_3\}}\Big)^2 c_{1}(d_{1}+e_{1}+f_{1})Z_{1}
 \\ \qquad
 {}\times\bigg(\TA{0}(d_{1}+e_{1}+f_{1})b_{1} + \AA{0}Y_{1}\bigg)\bigg(\TT{0}(d_{1}+e_{1}+f_{1})b_{1}+ \AT{0}Y_{1}\bigg).
\end{gather*}

Next we want to rewrite the parts involving the kernel in terms of Dodgson polynomials rather than spanning forest polynomials or the apart-together notation (which no longer
has any hourglasses in it). Using $K'$ as in the statement of the theorem this gives (see Figure \ref{fig Kprime})
\begin{gather}
-4c_1(d_1+e_1+f_1)Z_1\big(\Psi_{K'}^{1,2}\big)^2\big(\Psi_{K', 1}^{2,2}(d_1+e_1+f_1)b_1+\Psi_{K'}^{12,12}Y_1\big)\nonumber
\\ \qquad
{}\times\big(\Psi_{K', 12}(d_1+e_1+f_1)b_1+ \Psi_{K', 2}^{1,1}Y_1\big).
\label{finaleq}
\end{gather}

\subsection{The endgame}\label{sec endgame}

Because the expression (\ref{finaleq}) vanishes modulo 2 we get the result for $q=2$ from the last statement in Theorem~\ref{thmqdr}.

We now restrict ourselves to odd prime powers and show that the last hourglass can be eliminated. With (\ref{finaleq}) we achieved the following situation: we have two sets of variables,
the five variables $b_1,c_1,d_1,e_1,f_1$ from the hourglass $X_1$ and some variables $\alpha_i$ in the Dodgson polynomials with the kernel $K'$.
The expression (\ref{finaleq}) does not depend on the two variables $\alpha_1$ and $\alpha_2$ which are associated to the extra edges 1 and 2 in $K'$.
Let $d$ be the degree of $\Psi_{K'}^{12,12}$. Then $\Psi_{K', 1}^{2,2}$, $\Psi_{K'}^{1,2}$, and $\Psi_{K', 2}^{1,1}$ have degree $d+1$,
while $\Psi_{K', 12}$ has degree $d+2$ (see, e.g.,~\cite{Sc2}). The total degree of (\ref{finaleq}) is $4d+14$ which equals twice the total number of its variables.

Quadratic denominator reduction stops at (\ref{finaleq}). To obtain further reductions we use a scaling technique
which was first used in~\cite{SFq} and later adopted in~\cite{K3} to exhibit a K3 structure in $\phi^4$ theory at loop order eight. Considering (\ref{finaleq}) as a denominator
of an integrand it is clear that the variables separate under a scaling transformation of all $\alpha_i$ by $S=Y_1/[b_1c_1(d_1+e_1+f_1)]$. Because $S$
has total degree zero, homogeneity of (\ref{finaleq}) is preserved.

For Legendre sums over finite fields we need the following argument. If $S\in\FF_q^\times$ the $\alpha_i$-variables in the Legendre sum can be multiplied by $S$ yielding
\begin{gather*}
-b_1(d_1+e_1+f_1)^2Y_1Z_1\big(\Psi_{K'}^{1,2}\big)^2\big(\Psi_{K', 1}^{2,2}+c_1\Psi_{K'}^{12,12}\big) \big(\Psi_{K', 12}+c_1\Psi_{K', 2}^{1,1}\big)(2S^{2d+2})^2.
\end{gather*}
The last factor is a non-zero square which can be dropped from the Legendre sum. This suggests that
\begin{gather}
(\text{expression (\ref{finaleq}}))_q\nonumber
\\ \qquad
{}=\Big({-}b_1(d_1+e_1+f_1)^2Y_1Z_1\big(\Psi_{K'}^{1,2}\big)^2\big(\Psi_{K', 1}^{2,2}+c_1\Psi_{K'}^{12,12}\big)\big(\Psi_{K', 12}+c_1\Psi_{K', 2}^{1,1}\big)\Big)_q.
\label{endgame}
\end{gather}

In $\FF_q$, we cannot ignore the singular locus of the scaling transformation. We need the following lemma.
\begin{Lemma}
Let $P$ be a homogeneous polynomial of odd degree and let $q$ be an odd prime power. Then $(P)_q=0$.
\end{Lemma}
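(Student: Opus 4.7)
The plan is to exploit the homogeneity of $P$ by rescaling all variables by a fixed quadratic non-residue and combining this with the multiplicativity of the Legendre symbol. Since $q$ is an odd prime power, the squares form a subgroup of $\FF_q^\times$ of index two, so there exists $t\in\FF_q^\times$ with $(t/q)=-1$.

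Next, I would observe that the map $\alpha\mapsto t\alpha$ is a bijection of $\FF_q^N$ to itself, so reindexing the sum in Definition~\ref{deflege} gives
\begin{gather*}
(P)_q=\sum_{\alpha\in\FF_q^N}\bigg(\frac{P(\alpha)}{q}\bigg)=\sum_{\alpha\in\FF_q^N}\bigg(\frac{P(t\alpha)}{q}\bigg).
\end{gather*}
Homogeneity of $P$ of degree $d$ then yields $P(t\alpha)=t^d P(\alpha)$, and multiplicativity of the Legendre symbol (used in the form $(ab/q)=(a/q)(b/q)$, as noted just after Definition~\ref{deflege}) gives
\begin{gather*}
\bigg(\frac{P(t\alpha)}{q}\bigg)=\bigg(\frac{t^d}{q}\bigg)\bigg(\frac{P(\alpha)}{q}\bigg)=\bigg(\frac{t}{q}\bigg)^d\bigg(\frac{P(\alpha)}{q}\bigg).
\end{gather*}

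With $d$ odd and $(t/q)=-1$ the factor $(t/q)^d$ collapses to $-1$, so summing over $\alpha$ we obtain $(P)_q=-(P)_q$ as an identity in $\ZZ$. Dividing by $2$ (which is permitted in $\ZZ$) forces $(P)_q=0$. There is essentially no obstacle in this argument; the only point that requires the hypothesis on $q$ is the existence of a quadratic non-residue, which fails precisely when $q$ is even and is why the lemma (like quadratic denominator reduction itself) is restricted to odd prime powers.
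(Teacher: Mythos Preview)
Your proof is correct and follows essentially the same approach as the paper's: pick a quadratic non-residue $t\in\FF_q^\times$, rescale all variables by $t$, use homogeneity and multiplicativity of the Legendre symbol to obtain $(P)_q=-(P)_q$ in $\ZZ$, and conclude $(P)_q=0$. Your write-up simply makes the bijection and the final division by $2$ a bit more explicit than the paper does.
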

\begin{proof}
Because $q$ is odd there exists an $x\in\FF_q^\times$ which is not a square (half the elements in $\FF_q^\times$ are non-squares). Scaling all variables by $x$ gives
\begin{gather*}
(P)_q=\big(Px^D\big)_q=(P)_q\genfrac(){}{}{x}{q}^D=-(P)_q,
\end{gather*}
where we used that the degree $D$ of $P$ is odd.
\end{proof}
We observe that in any situation where $S$ is singular (i.e., some of the $Y_1$, $b_1$, $c_1$, $d_1+e_1+f_1$ are zero) both expressions~-- (\ref{finaleq}) and the polynomial on the
right hand side of (\ref{endgame})~-- are either zero or the product of two factors in separate variables which are homogeneous of odd degree. The validity of (\ref{endgame}) follows from the
Lemma with an inclusion-exclusion argument.

The term on the right hand side of (\ref{endgame}) is homogeneous of degree $4d+14$ which equals twice the number of variables. We may use quadratic denominator reduction
in the variables $f_1$, $e_1$, $d_1$, $b_1$ (in this sequence) yielding (use, e.g.,~\cite{Shlog})
\begin{gather*}
(\text{expression (\ref{finaleq}}))_q\equiv\Big(c_1\big(\Psi_{K'}^{1,2}\big)^2\big(\Psi_{K', 1}^{2,2}+c_1\Psi_{K'}^{12,12}\big)\big(\Psi_{K', 12}+c_1\Psi_{K', 2}^{1,1}\big)\Big)_q\mod q.
\end{gather*}
By contraction-deletion (\ref{cd}) the polynomial on the right hand side is $\alpha_1\big(\Psi_{K'}^{1,2}\big)^2\Psi_{K'}^{2,2}\Psi_{K', 2}$, where we renamed $c_1$ to $\alpha_1$.

Theorem~\ref{mainthm} follows from Theorem~\ref{thmqdr} because the number of reduced variables is odd.

\section{Kernels}\label{sec kernels}
In this section we study kernels which lead to 4-regular hourglass chains. This implies that the kernel $K$ is internally 4-regular while every external vertex has two incident edges
(see~Figure~\ref{Fig:cases}).

\subsection{Trivial kernels}
Periods which admit a 3-vertex split are products (see Figure \ref{fig product}).
Assuming the completion conjecture, their $c_2$ invariants vanish. We hence skip kernels with a 3-vertex split.
If a kernel has a double triangle (see Figure \ref{fig double triangle reduction}), the $c_2$ invariant is equal to the $c_2$ of a smaller graph,
where the double triangle is reduced~\cite{BSYc2, Scensus}.
We also exclude these cases in $K$ because their $c_2$s are found in smaller kernels.
Moreover, we ignore kernels which have an external hourglass in such a way that it adds to the chain (with the exception that $K$ is an hourglass).

Assume a kernel $K$ with at least two internal vertices has two or more edges between external vertices.
Then, every hourglass chain $L\in \mathcal{G}_K$ splits if one cuts the four or less edges of $K$ which have exactly one external vertex.
The chain $L$ has a subdivergence, its period diverges and the~$c_2$ vanishes~\cite{BSYc2}. The same holds true if $K$ has a non-trivial internal four edge cut.

So, for kernels $K$ with $\geq2$ internal vertices we restrict ourselves to the case that $K$ has at most one edge between external vertices.
If such an edge $e$ shares its vertices with edge~1 or~2 in~$K'$, then any $L\in \mathcal{G}_K$ has a double-triangle.
After two double-triangle reductions we are left with a graph which has a three vertex split and the $c_2$ (conjecturally) vanishes.
We are effectively left with the case that $e$ joins the vertices of the hourglasses from opposite ends of the chain.
By Theorem~\ref{mainthm} M\"obius twists can be ignored as they lead to equal $c_2$s.
The case of one edge between external vertices of $K$ thus reduces to a single setup (which becomes relevant for kernels with $\geq6$ internal vertices).

If $K$ has no edge between external vertices, there exist three potentially distinct cases how to glue the kernel $K$ into the hourglass chain (corresponding to the 2,2 set
partitions of the external vertices).

\subsection{Small kernels}
We generated all effectively different kernels with up to ten internal vertices. We use the fact that every kernel $K$ can be made 4-regular by
adding a square to the external edges. Non-trivial kernels with at least one internal vertex can be found in 4-regular graphs which are internally six-connected and do not have
a three vertex cut. Such 4-regular graphs are called irreducible primitive in~\cite{Scensus}. From opening these graphs along all their squares we obtain the number of effectively
different kernels given in Table \ref{tab:numberkernels} (graphs were generated with nauty~\cite{NAU}). See Figure~\ref{Fig:cases} for the cases with at most five internal vertices.

\begin{figure}
\centering
\includegraphics{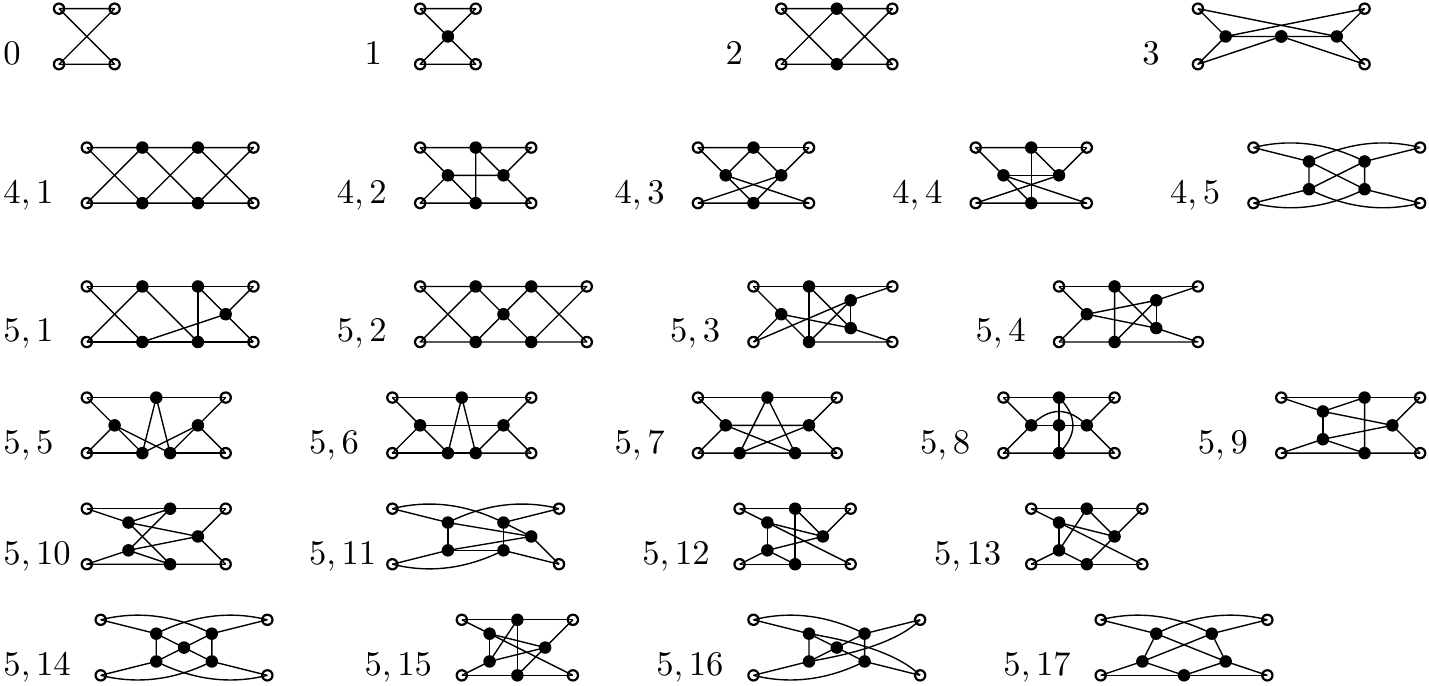}
\caption{The kernels with up to five internal vertices. Here, edges 1 and 2 in $K'$ would be vertical edges on either sides of the depicted kernels $K$.}\label{Fig:cases}
\end{figure}

\begin{table}\renewcommand{\arraystretch}{1.2}
\caption{The number of effectively different kernels in $\phi^4$ theory increases rapidly with the number of internal vertices.}\label{tab:numberkernels}
\begin{center}
\begin{tabular}{l|ccccccccccc}
\hline
\# internal vertices&0&1&2&3&4&5&6&7&8&9&10
\\ \hline
\# different kernels&1&1&1&1&5&17&78&497&3882&33587&316860
\\
\hline
\end{tabular}
\end{center}
\end{table}

The connection between $c_2$ invariants and geometries is described in~\cite{BSmod,Sc2}. Here, we investigated all kernels with at most six internal vertices.
For kernels with up to five internal vertices the $c_2$ invariants are listed in Table \ref{Tab:c2}. Note that most results are not proven but obtained by identifying finite $c_2$ prefixes
(indexed by primes). In general it is convenient to restrict prefix calculations to primes because (1) computations are simpler and faster, (2) it is assumed that prime prefixes determine
the geometry, see~\cite[Conjecture~2]{Sc2}, and (3) modularity only uses primes, see~\cite[Definition~21]{BSmod}.
All identified modular $c_2$s are confirmed up to prime~29 using the Maple package {\tt HyperlogProcedures} by the first author~\cite{Shlog}.
Computations for the primes 31 and 37 are ongoing.
\begin{table}[t!]\renewcommand{\arraystretch}{1.1}
\caption{The $c_2$ invariants for kernels with up to five internal vertices. The first column in the table refers to the number of internal vertices and the label in Figure~\ref{Fig:cases}.
Unidentified sequences in the second column are specified by their prime prefix $\big({-}c_2^{(p)}\mod p\big)_{p=2,3,5,7,11,\dots}$ (note that modularity, e.g., applies to the negative $c_2$).\vspace{1ex}}\label{Tab:c2}
\centering
\begin{tabular}{l|l}
\hline
\multicolumn{1}{c|}{kernel} &\multicolumn{1}{c}{$c_2$ invariant}
\\
\hline\hline
0&Legendre symbol $(-4/q)$
\\\hline
1&Legendre symbol $(4/q)$
\\\hline
2&Legendre symbol $(4/q)$
\\\hline
3&modular form of weight 4 and level 8
\\\hline
4,1&Legendre symbol $(-4/q)$
\\
4,2&unidentified sequence $0,2,3,2,3,8,15,9,6,27,11,32\dots$
\\
4,3&Legendre symbol $(4/q)$
\\
4,4&unidentified sequence $0,0,0,0,0,7,16,0,0,22,0,19,\dots$
\\
4,5&unidentified sequence $0,1,3,5,8,8,15,10,17,27,20,32\dots$
\\\hline
5,1&modular form of weight 4 and level 16
\\
5,2&Legendre symbol $(4/q)$
\\
5,3&modular form of weight 9 and level 4
\\
5,4&unidentified sequence $0,1, 1, 1, 2, 4,\dots$
\\
5,5&unidentified sequence $0,1, 1, 0, 2, 0, 3,\dots$
\\
5,6&unidentified sequence $0,1, 4, 2, 0, 3, 1,\dots$
\\
5,7&unidentified sequence $0,2, 4, 2, 8, 4,\dots$
\\
5,8&unidentified sequence $0,2, 0, 6, 10, 9, 10,\dots$
\\
5,9&unidentified sequence $0,2, 4, 5, 0, 3, 1, 2,\dots$
\\
5,10&unidentified sequence $0,1, 1, 1, 1, 7, 6, 17, 2,\dots$
\\
5,11&unidentified sequence $0,1, 1, 6, 1, 11, 2,\dots$
\\
5,12&unidentified sequence $0,1, 4, 5, 4, 4, 11,\dots$
\\
5,13&unidentified sequence $0,0, 1, 5, 1, 3, 16,\dots$
\\
5,14&modular form of weight 6 and level 4
\\
5,15&unidentified sequence $0,0, 0, 4, 5, 10, 3,\dots$
\\
5,16&unidentified sequence $0,0,3,0,0,3,1,0,0,25,\dots$
\\
5,17&modular form of weight 6 and level 4
\\
\hline
\end{tabular}
\end{table}

Full reductions were possible for the Legendre symbol $(-4/q)$ in kernel 4,1 and for the Legendre symbol $(4/q)$ in kernel 5,2 and in a kernel with six internal vertices.
For these hourglass families of $\phi^4$ ancestors the $c_2$ is proved (except for non-trivial even prime powers).

The modular form [9,4] in kernel 5,3 was not found in the $c_2$ invariants of $\phi^4$ graphs of loop order $\leq12$ (see~\cite{Sc2}).
It is the first form of weight 9 that has been found in $\phi^4$ theory. Note that 4 is the lowest level of all forms of weight 9 which
fits into the picture that forms in $\phi^4$ have very low level. No new modular forms were found in hourglass chains of kernels with six internal vertices.

Assuming the completion conjecture we could show that no weight 2 modular form of level $\leq1000$ (corresponding to point-counts of curves) exists
in hourglass chains of kernels with at most six internal vertices. This result provides substantial extra support for the no-curve conjecture in~\cite{BSmod,Sc2}.

The analysis of kernels with seven (or more) internal vertices requires significantly more computing power. We did not pursue this here.

\subsection*{Acknowledgements}
Both authors are deeply indebted to Dirk Kreimer for many years of encouragement and support.
Oliver Schnetz is supported by DFG grant SCHN~1240. Karen Yeats is supported by an NSERC Discovery grant and by the Canada Research Chairs program; during some of this work she was visiting Germany as a Humboldt fellow.

\pdfbookmark[1]{References}{ref}
\LastPageEnding

\end{document}